\newlength{\figwidths}
\newlength{\expwidths}
\newlength{\expwidthd}
\newtheorem{problem}{Problem}%[section]
\newenvironment{revised}{\color{black}}{}
\DeclareMathOperator*{\argmax}{arg\,max}
\begin{document}

\title{Multivariate Time Series Cleaning under Speed Constraints}

\author{Aoqian Zhang}
%\authornote{Corresponding author}
\affiliation{%
  \institution{Beijing Institute of Technology}
  \country{China}}
\email{aoqian.zhang@bit.edu.cn}

\author{Zexue Wu}
\affiliation{%
  \institution{Beijing Institute of Technology}
  \country{China}}
\email{zexue.wu@bit.edu.cn}

\author{Yifeng Gong}
\affiliation{%
  \institution{Beijing Institute of Technology}
  \country{China}}
\email{yifeng.gong@bit.edu.cn}

\author{Ye Yuan}
\affiliation{%
  \institution{Beijing Institute of Technology}
  \country{China}}
\email{yuan-ye@bit.edu.cn}

\author{Guoren Wang}
\affiliation{%
  \institution{Beijing Institute of Technology}
  \country{China}}
\email{wanggr@bit.edu.cn}

%%
%% By default, the full list of authors will be used in the page
%% headers. Often, this list is too long, and will overlap
%% other information printed in the page headers. This command allows
%% the author to define a more concise list
%% of authors' names for this purpose.
\renewcommand{\shortauthors}{Zhang et al.}

\begin{abstract}
Errors are common in time series due to unreliable sensor measurements.
Existing methods focus on univariate data but do not utilize the correlation between dimensions.
Cleaning each dimension separately may lead to a less accurate result, as some errors can only be identified in the multivariate case.
We also point out that the widely used minimum change principle is not always the best choice.
Instead, we try to change the smallest number of data to avoid a significant change in the data distribution.
In this paper, we propose \textsf{MTCSC}, the constraint-based method for cleaning multivariate time series.
We formalize the repair problem, propose a linear-time method to employ online computing, and improve it by exploiting data trends.
We also support adaptive speed constraint capturing.
We analyze the properties of our proposals and compare them with SOTA methods in terms of effectiveness, efficiency versus error rates, data sizes, and applications such as classification.
Experiments on real datasets show that \textsf{MTCSC} can have higher repair accuracy with less time consumption.
Interestingly, it can be effective even when there are only weak or no correlations between the dimensions.
\end{abstract}
%%
%% The code below is generated by the tool at http://dl.acm.org/ccs.cfm.
%% Please copy and paste the code instead of the example below.
%%
%\begin{CCSXML}
%<ccs2012>
%   <concept>
%       <concept_id>10003752.10010070.10010111</concept_id>
%       <concept_desc>Theory of computation~Database theory</concept_desc>
%       <concept_significance>300</concept_significance>
%       </concept>
% </ccs2012>
%\end{CCSXML}
%
%\ccsdesc[300]{Theory of computation~Database theory}
%
%\keywords{multivariate time series, data cleaning, speed constraint}

\settopmatter{printfolios=true}
\maketitle

%--------------------------------------------------
% !TEX root = ../speed.tex

\section{Introduction}
\label{sect:introduction}

%Outliers are common in time series and can have two different meanings (errors v.s. anomalies), based on the how they are generated, interest of the analyst and the scenario considered \cite{DBLP:journals/csur/Blazquez-Garcia21}.
%We focus that related to noise, erroneous, incorrect values result from measurement inaccuracies, data recording mistakes, or system malfunctions \cite{aggarwal2016outlier}.
Outliers are common in time series and can have two different meanings (errors or anomalies) \cite{DBLP:journals/csur/Blazquez-Garcia21}.
We focus on errors that result from measurement inaccuracies, data collection or system malfunctions \cite{aggarwal2016outlier}.
Even in areas such as stocks and flights, a surprisingly large amount of inconsistent data is observed \cite{DBLP:journals/pvldb/LiDLMS12}.
It is important to identify and correct these erroneous values to ensure the accuracy and reliability of time series analyzes \cite{hyndman2018forecasting}.

\subsection{Motivation}
Constraint-based methods for cleaning time series have been proposed in recent years.
Existing studies \cite{DBLP:conf/sigmod/SongZWY15, DBLP:journals/tods/SongGZWY21} consider the constraints of speeds and accelerations on value changes, namely speed/ acceleration constraints, respectively.
They recognize the violations and modify these dirty points according to the minimum change principle \cite{DBLP:conf/sigmod/BohannonFFR05} to the repaired results that correspond to the defined constraints.
However, such a principle leads to maximum/minimum compatible values being the final results (referred to as border repair), which can significantly change the data distribution \cite{DBLP:journals/pvldb/DasuL12}.
On the other hand, current constraint-based methods all focus on univariate time series, i.e. with only one dimension.
A multivariate data point can meet the given constraint in each of the single dimensions, but still violate the speed constraint in the multivariate case (see data point at $\mathit{t}_8$ in Example \ref{example:motivation}).
Furthermore, they cannot recognize the small errors that actually meet the speed constraints.
The small errors are very important in some applications such as automatic driving \cite{DBLP:conf/sigmod/ZhangSW16}.
%If they are not taken into account, data mining tasks such as the formation of meaningful clusters in GPS readings can fail \cite{DBLP:conf/kdd/SongLZ15}.

Smoothing methods such as \cite{GARDNER2006637} use the weighted average of previous values as repaired results and suffer from the problem of over-repair, where almost all data points are changed even though most of them are originally correct.
The statistical-based method \cite{DBLP:conf/sigmod/ZhangSW16} builds a probability distribution model of speed changes between neighboring data points over the entire data to repair small errors.
\cite{DBLP:journals/vldb/WangZSW24} develops algorithms in streaming scenarios using a dynamic probability construction.
Although small errors are successfully repaired, the above methods are still proposed for univariate time series and do not capture the correlations between data dimensions.
The more recent learning-based methods \cite{DBLP:journals/pvldb/TuliCJ22, DBLP:journals/corr/abs-2107-12626} pay more attention to the anomalies (not errors), i.e. the observations that deviate significantly from the majority of the data.
These methods always require a large number of clean data samples and are quite sensitive to the hyper-parameters.

%\\
%In this paper, we consider the aforementioned maximum/minimum speeds but do not employ the principle of minimal repair. 
%Instead, we assert that the majority of data points are correct and should be modified with minimal alterations while satisfying speed constraints. 
%Simultaneously, since boundary repairs may deviate from reality, we adopt a strategy of predicting the current point by averaging the preceding and succeeding points when repairing points. 
%We propose a novel approach that integrates constraints and smoothing techniques for data cleansing. 
%Let us illustrate the motivation with an example below.

%\zaq{
%Why this is still a problem???
%We will state by the faults of existing methods (The part after => is written in Solution Section):
%a) SCREEN/SpeedAcc -> constraint based method -> 
%1) minimum change principle will lead to border repair and not accurate => minimum fix principle (border repair will harm the data distribution: \cite{DBLP:journals/pvldb/DasuL12})
%2) After the repairing w.r.t. the given (speed) constraints on each dimension, it is not guaranteed to satisfy the constraint over multiple dimensions => We should consider the multivariate time series as a whole.
%3) As mentioned in \cite{DBLP:conf/sigmod/ZhangSW16}, small errors satisfying the speed constraint can not be detected. => data distribution
%\\
%b) LsGreedy -> statistical ->
%1) streaming case 
%2) only consider univariate case
%\\
%c) recent learning-based method ->
%1) anomaly/error tolerant?
%2) hard to tune the parameter
%}

\begin{figure}[t]
  \centering
  \includegraphics[width=\figwidths]{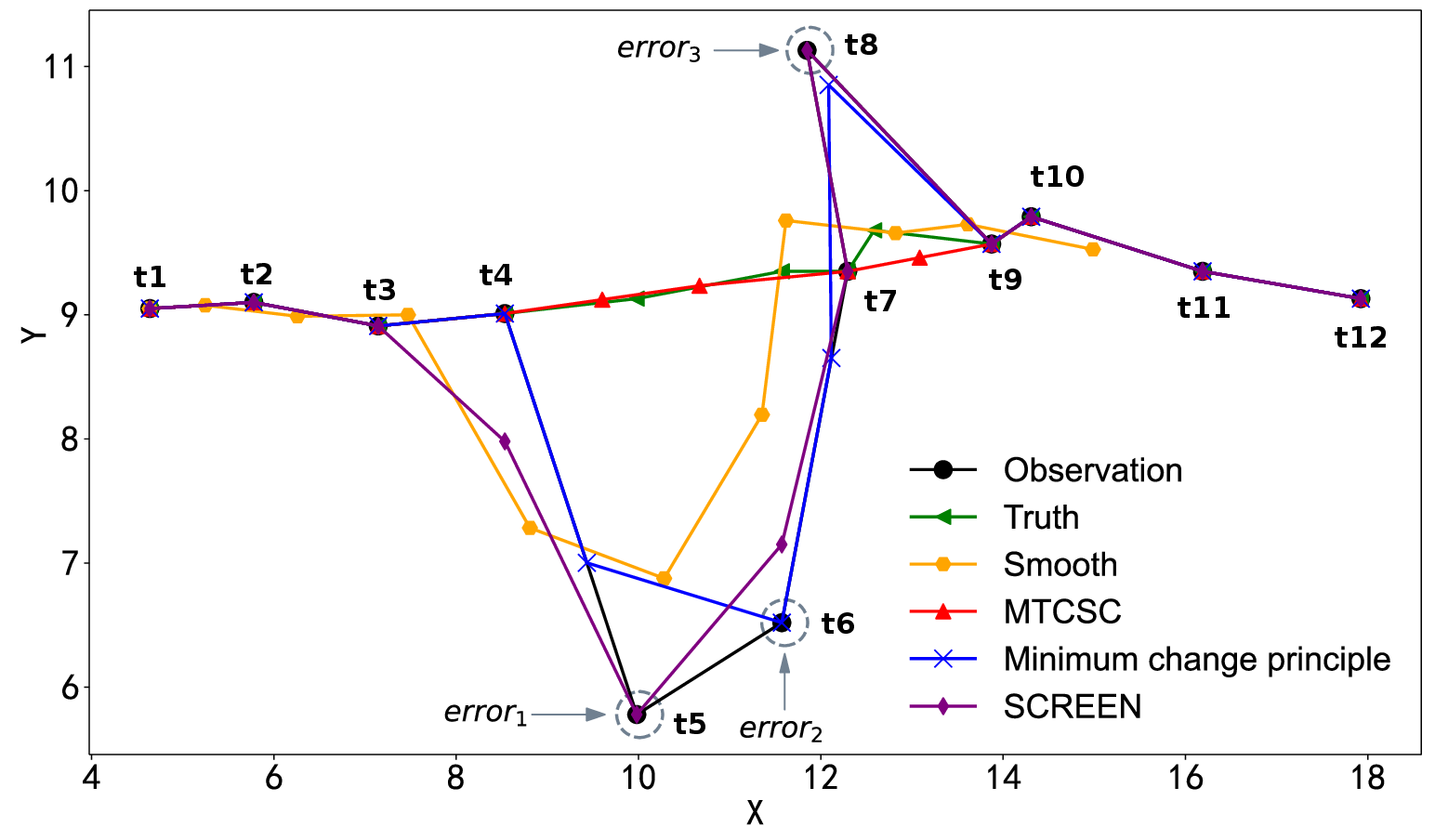}
  %\caption{Two dimensions satisfy speed constraints respectively, but two-dimensional dimensions do not satisfy speed constraints. The boundary repair caused by the principle of minimum change is not close to the truth}%
  \caption{Example of observations and different repairs}
  \label{fig:example1}
\end{figure}

\begin{example}
\label{example:motivation}
%For example, in a vehicle, the distance of GPS locations in one hour should not be greater than the maximum speed observed in this slide window.
Consider a continuous sequence of 12 GPS trajectories in Figure \ref{fig:example1}.
There are considerable deviations in the observations (in black) from $\mathit{t}_5, \mathit{t}_6, \mathit{t}_8$, which are influenced by the passage through the building and are regarded as errors.

The smoothing method (in orange) changes almost all points, even the originally correct ones.
%(Due to the lack of time dimension in the figure, this is not visually apparent.)
%(Since there is no time dimension in the figure, it is not visually apparent that the result is overly smoothed.)
%As a univariate method, \textsf{SCREEN} considers each dimension separately.
%This results in erroneous modification of the data point at $\mathit{t}_4$ (in purple). 
%Additionally, at $\mathit{t}_7$, data point $\mathit{error}_3$ satisfies the speed constraint in each dimension individually, leading \textsf{SCREEN} to not repair it.
As a univariate method, \textsf{SCREEN} considers each dimension separately.
The point at $\mathit{t}_8$ satisfies the speed constraint in every single dimension and remains unchanged, the point at $\mathit{t}_4$ is changed accordingly (in purple).
%Following the minimum change principle, point at $\mathit{t}_4$ will be repaired to the border place (in purple).

Then we consider using speed constraint in the multivariate case.
Following the minimum change principle, the border repair is applied at $\mathit{t}_5$ and $\mathit{t}_8$, and
$\mathit{t}_6$ is therefore considered correct.
If we take advantage of the data distribution (trend of succeeding points) and manage to change the least number of points, the repair result (our proposal \textsf{MTCSC}, in red) is closer to the ground truth (in green).
%Consequently, the data point at time t6 (i.e. $\mathit{error}_2$) is considered correct.
%Additionally, the repair for data point $\mathit{error}_3$ also involves boundary repair.
%When speed constraints are considered simultaneously in multiple dimensions, 
\end{example}

\subsection{Solution}
\label{sect:solution}
To overcome the above problems of the existing methods, we apply the speed constraint over all dimensions together instead of considering each dimension separately.
Moreover, we do not follow the \emph{minimum change principle} \cite{DBLP:conf/sigmod/BohannonFFR05} to minimize the total distance before and after the repair,
\begin{revised}
since a certain portion of originally clean data is still changed.
\end{revised}
Instead, we alternatively manage to fix the minimum number of data points and maintain the data distribution, a.k.a. the \emph{minimum fix principle} \cite{DBLP:conf/icdt/AfratiK09}.
For small errors that are difficult to detect by the speed constraint, we use the information about the data distribution in the sliding window and modify the points even if they satisfy the given constraints.
Based on the above techniques, we propose \textsf{MTCSC} to clean the multivariate time series under speed constraints.

However, there still exists some challenges.
(1) It may be difficult to explain why it makes sense to consider the speed constraint across all dimensions together.
For a GPS trajectory, it is obvious that the speed should be calculated across all two dimensions.
But it is not convincing enough to calculate the speed across the dimensions without any correlations.
Will this work in practice and why?
(2) When considering the multivariate case, the search for a suitable repair candidate is not trivial due to the \emph{curse of dimensionality} \cite{taylor2019dynamic}.
On the one hand, the higher the dimensions, the sparser the data points are, which makes it difficult to set a suitable speed constraint.
On the other hand, the candidate space becomes quite large.
In the univariate case, the repair candidate lies on a line, in the two-dimensional case on a circle, in the three-dimensional case on a cube, and so on.
(3) Time series are not stationary \cite{DBLP:journals/jips/DingZLWWL23} in some scenarios, but the success of our proposal depends on the correct specification of the speed constraint.
How to dynamically determine the speed constraint when points arrive becomes another challenge.

\subsection{Contribution}
The proposed \textsf{MTCSC} is a linear time, constant space cleaning method over multivariate time series.
Our major contributions in this paper are summarized as:
\begin{enumerate}[fullwidth]
\item
%We define the speed constraint in the multivariate case.
We formalize the repair problem over multivariate time series under speed constraints by considering the whole time series or data points in a window, respectively, in Section \ref{sect:problem}.
%The proposed method manages to clean the whole series by changing the minimum number of data points.
By transforming to the MIQP/MILP problem, existing solver can be employed.
We further reduce the time complexity to polynomial time by using the idea of dynamic programming (\textsf{MTCSC-G}).
\item 
%We design an online cleaning method in Section \ref{sect:local-stream} with a relaxation of the requirements where speed constraints taking effect from the whole series to data points in a window.
%The main idea is to locally determine whether the first data point in the current sliding window to be repaired or not w.r.t. the minimum change principle. 
We design an online linear time cleaning method (\textsf{MTCSC-L}) in Section \ref{sect:local-stream} to locally determine whether the first point (key point) in the current window to be repaired or not as data point arrives. 
It is notable that soundness w.r.t. speed constraint satisfaction is guaranteed in the devised algorithm.
\item
We enhance the online method by capturing the data distribution in the local window via clustering (\textsf{MTCSC-C}) in Section \ref{sect:cluster} to further promote the accuracy. 
Considering the trend of the following data points in the given window, small errors that satisfy the speed constraint can also be repaired.
\item
We present an adaptive method (\textsf{MTCSC-A}) that can capture the proper speed constraint in Section \ref{sect:adaptive}. 
Once the difference of the speed distribution in a certain period is larger than a per-defined threshold, which means the characteristic of the series has been changed, the speed constraint is then updated to capture the feature of the incoming data.
\item
We analyze the experimental results on real datasets and discuss the superiority and limitations of our proposed methods in Section \ref{sect:experiment}. 
\end{enumerate}

Table \ref{table:notations} lists the notations frequently used in this paper.

\begin{table}[ht]
 \caption{Notations}
 \label{table:notations}
 \centering
 \resizebox{0.9\expwidths}{!}{%
 \begin{tabular}{rp{2.5in}}
 %\begin{tabular}{rl}
  \toprule
  Symbol & Description \\ %\noalign{\smallskip}
  \midrule
  $\boldsymbol{\mathit{x}}$ & time series \\
  $\boldsymbol{\mathit{x}}_i$, $\boldsymbol{\mathit{x}}[i]$ & $i$-th data point in $\boldsymbol{\mathit{x}}$ \\
  $\mathit{t}_i$ & timestamp of $i$-th data point in $\boldsymbol{\mathit{x}}$\\
  $\mathit{s}$ & speed constraint \\
  $\mathit{w}$ & window size of speed constraint \\
  $n$ & length of time series $\boldsymbol{\mathit{x}}$ \\
  $D$ & dimension of time series $\boldsymbol{\mathit{x}}$ \\
  $\boldsymbol{\mathit{x}}'$ & repair of time series $\boldsymbol{\mathit{x}}$ \\
  $\mathit{z}_i$ & 0 if $\boldsymbol{\mathit{x}}_i$ is not modified, otherwise, 1 \\
  $M$ & a constant number which is sufficient large \\
  %$\mathit{x}^{\min}_i, \mathit{x}^{\max}_i$ & the lower, upper bound of a valid repair $\mathit{x}'_i$ \\ \noalign{\smallskip}
  \textsf{satisfy}$(\boldsymbol{\mathit{x}}_i, \boldsymbol{\mathit{x}}_j)$ & $\boldsymbol{\mathit{x}}_i$ is compatible with $\boldsymbol{\mathit{x}}_j$ w.r.t. $\mathit{s}$ \\
 \bottomrule
 \end{tabular}
}
\end{table}

%--------------------------------------------------
% !TEX root = ../speed.tex

\section{Problem Statement}
\label{sect:problem}

In this section, we first introduce the time series and the speed constraint that are considered in our proposal.
Then, we define the repair problem using the \emph{minimum fix principle} with all data points as a whole.
Finally, we formalize the problem when we consider the constraints only locally in a given window.

\subsection{Speed Constraint}
\label{sect:constraint}

\begin{definition}[Time Series]
\label{definition:timeseries}
A time series is a sequence of data points indexed in time order. 
Consider a time series of $n$ observations, $\boldsymbol{\mathit{x}} = \{\boldsymbol{\mathit{x}}_1, \ldots, \boldsymbol{\mathit{x}}_n\}$. 
The $i$-th observation (data point) $\boldsymbol{\mathit{x}}_i \in \mathbb{R}^{D}$ consists of $D$ dimensions $\{\mathit{x}^1_i, \ldots, \mathit{x}^D_i\}$ and is observed at time $\mathit{t}_i$.
\end{definition}

\begin{definition}[Distance]
The distance between two data points is the Euclidean distance between them, denoted by 
$d(\boldsymbol{\mathit{x}}_i, \boldsymbol{\mathit{x}}_j) = \sqrt{\sum_{\ell=1}^{D}(\mathit{x}^{\ell}_i-\mathit{x}^{\ell}_j)^2}$.
\end{definition}

\begin{definition}[Speed Constraint]
\label{definition:speed-constraint}
A \emph{speed constraint} $\mathit{s}=(\mathit{s}_{\min},\allowbreak \mathit{s}_{\max})$ with window size $\mathit{w}$ is a pair of minimum speed $\mathit{s}_{\min}$ and maximum speed $\mathit{s}_{\max}$ over the time series $\boldsymbol{\mathit{x}}$.
We say that a time series $\boldsymbol{\mathit{x}}$ \emph{satisfies} the speed constraint $\mathit{s}$, denoted by $\boldsymbol{\mathit{x}}\vDash\mathit{s}$, if for any $\boldsymbol{\mathit{x}}_i, \boldsymbol{\mathit{x}}_j$ in a window, i.e.,  
$0 < \mathit{t}_j - \mathit{t}_i \leq \mathit{w}$, it has 
$\mathit{s}_{\min}\leq\frac{d(\boldsymbol{\mathit{x}}_i, \boldsymbol{\mathit{x}}_j)}{\mathit{t}_j-\mathit{t}_i}\leq\mathit{s}_{\max}$, 
also denoted by \textsf{satisfy}($\boldsymbol{\mathit{x}}_i, \boldsymbol{\mathit{x}}_j$),
where window $\mathit{w}$ denotes a period of time.
\end{definition}

As mentioned in \cite{DBLP:conf/sigmod/SongZWY15}, speed constraint is often useful in real-world scenarios within a certain time period.
For example, it makes sense to consider the maximum walking/running/driving speed in hours, as a person cannot usually perform these activities for days or longer without interruption.
Stock prices are limited to a certain range based on the last trading day's prices.
\begin{revised}
Speed constraint is applicable in practice.
It supports online cleaning and takes less time while maintaining relatively good repair accuracy.
Time series is naturally infinite.
An online method can better meet these requirements.
Meanwhile, speed constraint can be efficiently captured dynamically, which accommodates the non-stationary property of time series.
\end{revised}
Moreover, it is easy to know that $\mathit{s}_{\min} = 0$ in real cases because the minimum Euclidean distance will always be $0$.
Therefore, we assume that $\mathit{s}_{\min} = 0$ and refer to $\mathit{s}_{\max}$ as $\mathit{s}$ in the rest of this article.

Another observation is that the ``jump'' of values in time series data within a short time period is usually limited and has been verified for univariate data (i.e. only one dimension) \cite{DBLP:conf/sigmod/SongZWY15, DBLP:journals/tods/SongGZWY21}.
As for the multivariate case, one can extend this observation to data that exhibit high correlations over the distance between their dimensions.
For example, the GPS trajectory data.
However, it is surprising that our proposed speed constraint over multiple dimensions can also be effective for data with weak or no correlations.
See Section \ref{sect:exp-correlation} for an experimental analysis.

A repair $\boldsymbol{\mathit{x}}'$ of $\boldsymbol{\mathit{x}}$ is a modification of the values in $\boldsymbol{\mathit{x}}_i$ to $\boldsymbol{\mathit{x}}_i'$, 
i.e., $\exists\ell\in(1,D), \mathit{x}^{\ell}_i \neq \mathit{x}^{\ell'}_i$, where $\mathit{t}_i' = \mathit{t}_i$.
Referring to the minimum fix principle, the repair number is counted as 

\begin{equation}
\label{equation:repair-cost}
\Delta(\boldsymbol{\mathit{x}}', \boldsymbol{\mathit{x}}) = \sum_{i=1}^{n}\mathbb{I}(\boldsymbol{\mathit{x}}_i' \neq \boldsymbol{\mathit{x}}_i) = \sum_{i=1}^{n}\mathit{z}_i
\end{equation}

\begin{figure}[t]
  \centering
  \includegraphics[width=\figwidths]{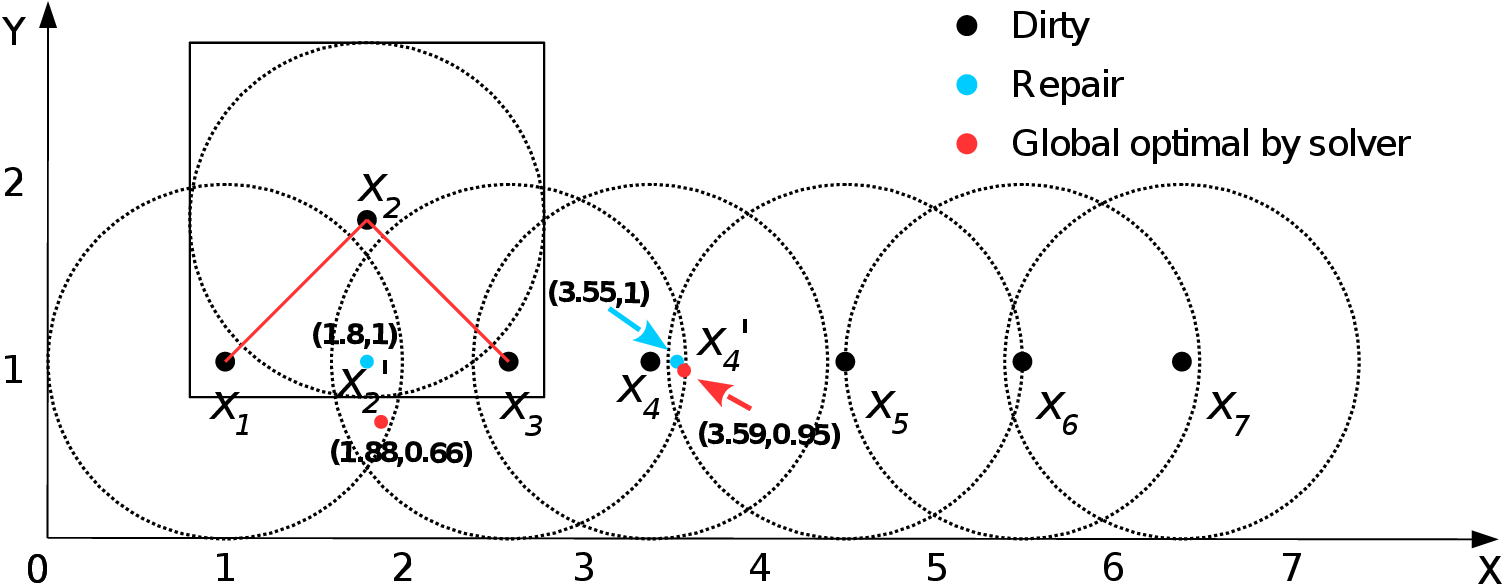}
  \caption{Possible repairs under speed constraints}%
  \label{fig:example2}
\end{figure}

\begin{example}[Speed constraints and repairs]
\label{example:dirty}
Consider a time series ($D = 2$) $\boldsymbol{\mathit{x}} = \{(1,1), (1.8,1.8), (2.6,1), (3.4,1), (4.5,1), (5.5,1), \allowbreak (6.4,1)\}$ of 7 points, with timestamps $\mathit{t} = \{1, 2, \allowbreak 3, 4, 5, 6, 7\}$.
Let the speed constraint $\mathit{s} = 1$.
Figure \ref{fig:example2} illustrates the data points (in black).
The dashed circles in the figure all have a radius of $1 (= \mathit{s} \cdot 1)$.
If data points from adjacent timestamps fall into each other's dashed circles, this means that they mutually satisfy the speed constraints.

Let windows size $\mathit{w} = 1$.
Points $\boldsymbol{\mathit{x}}_1$ and $\boldsymbol{\mathit{x}}_2$, with time interval  $2-1 \leq 1$ in a window, are identified as violations to $\mathit{s}$, since the speed is $\sqrt{(1.8-1)^2+(1.8-1)^2}/(2-1) \approx 1.13>1$.
Similarly, $\boldsymbol{\mathit{x}}_2$ and $\boldsymbol{\mathit{x}}_3$ with speed $\sqrt{(2.6-1.8)^2+(1-1.8)^2}/(2-1) \approx 1.13>1$ are violations.
However, if the two dimensions are considered separately, $\boldsymbol{\mathit{x}}_2$ will be compatible with $\boldsymbol{\mathit{x}}_1$ and $\boldsymbol{\mathit{x}}_3$ (denoted by black square, the speed are both 0.8 in x and y direction).

To fix the violations (indicated by red lines), a repair can be performed on $\boldsymbol{\mathit{x}}_2$, i.e.,$\boldsymbol{\mathit{x}}'_2$ = (1.8,1) (the blue point).
As shown in Figure \ref{fig:example1}, $\boldsymbol{\mathit{x}}'_2$ is compatible with $\boldsymbol{\mathit{x}}_1$ and $\boldsymbol{\mathit{x}}_3$.
Similarly, a repair can be performed for $\boldsymbol{\mathit{x}}_4$, i.e. $\boldsymbol{\mathit{x}}'_4$ = (3.55,1) (the blue point).
The repaired sequence therefore satisfies the speed constraints.
The repaired points are $\boldsymbol{\mathit{x}}_2$ and $\boldsymbol{\mathit{x}}_4$, and the repair number is 2.
\end{example}

\subsection{Global Optimal}
\label{sect:global}

The cleaning problem is to find a repair time series, termed as \emph{global optimal}, that satisfies the given speed constraints and minimizes the number of points changed from the original time series.

\begin{problem}
Given a finite time series $\boldsymbol{\mathit{x}}$ of $n$ data points and a speed constraint $\mathit{s}$, the global optimal repair problem is to find a repair $\boldsymbol{\mathit{x}}'$ such that $\boldsymbol{\mathit{x}}'\vDash\mathit{s}$ and $\Delta(\boldsymbol{\mathit{x}}', \boldsymbol{\mathit{x}})$ is minimized.
\end{problem}

The global optimal repair problem can be formalized as follows, 
\begin{align}
\label{equation:global}
\min\quad & \sum_{i=1}^{n}\mathit{z}_i, & & \mathit{z}_i \in \{0, 1\} \\ 
\label{equation:global-max}
\mathrm{s.t.}\quad & 
\frac{d(\boldsymbol{\mathit{x}}'_i, \boldsymbol{\mathit{x}}'_j)}{\mathit{t}_j-\mathit{t}_i} \leq\mathit{s}, &  & 
\mathit{t}_i<\mathit{t}_j\leq\mathit{t}_i+\mathit{w}, 1\leq i,j\leq n \\ 
%\mathit{t}_i<\mathit{t}_j\leq\mathit{t}_i+\mathit{w}, \\ 
%\nonumber
%& & & 1\leq i\leq n, 1\leq j\leq n \\
%\label{equation:global-min}
% & 
%\frac{d(\boldsymbol{\mathit{x}}'_i, \boldsymbol{\mathit{x}}'_j)}{\mathit{t}_j-\mathit{t}_i} \geq\mathit{s}_{\min}, &  & 
%\mathit{t}_i<\mathit{t}_j\leq\mathit{t}_i+\mathit{w},  \\ 
%\nonumber
%& & & 1\leq i\leq n, 1\leq j\leq n \\
\label{equation:global-repair}
 &
%\mathit{x}^{\ell}_i-M\cdot\mathit{z}_i\leq\mathit{x}^{\ell'}_i\leq\mathit{x}^{\ell}_i+M\cdot\mathit{z}_i, &  &
%\mathit{x}^{\ell'}_i \in \mathit{x}^{\ell}_i+[-M\cdot\mathit{z}_i, M\cdot\mathit{z}_i], &  &
\mathit{x}^{\ell'}_i \in [\mathit{x}^{\ell}_i \pm M\cdot\mathit{z}_i], &  &
M>c, 1\leq i\leq n, 1\leq\ell\leq D
%M>c, \\
%\nonumber
%& & & 1\leq i\leq n, 1\leq\ell\leq D
\end{align}
where $c$ is constant integer that large enough to make $\mathit{x}^{\ell}_i$ feasible.

The correctness of result $\mathit{x}'$ is easy to prove. 
Formula (\ref{equation:global}) is the repair number in formula (\ref{equation:repair-cost}) to minimize. 
The speed constraints are ensured in formula (\ref{equation:global-max}), %and (\ref{equation:global-min}), 
by considering all the $\mathit{t}_j$ in the window starting from $\mathit{t}_i$.
Finally, formula (\ref{equation:global-repair}) identifies the repair value for each data point $i$ in the time series.

\subsubsection{Optimization Solver}
\label{sect:global-solver}
After formalization, the problem is a standard mixed-integer quadratic program (MIQP) problem \cite{burer2011milp},
and $\mathit{z}_i, \mathit{x}^{\ell'}_i, 1\leq i\leq n, 1\leq\ell\leq D$ are variables in problem solving. 
If the input is univariate, the problem is reduced to a mixed-integer linear program (MILP) problem \cite{DBLP:books/fm/GareyJ79}, since the distance is linear.
Existing solvers like Gurobi \cite{Gurobi2024} can be used directly.
\begin{revised}
Once the input time series has been converted to the constraint type, according to the formulas (\ref{equation:global-max}) and (\ref{equation:global-repair}), where M is set to 1000, it can be fed directly into an existing optimization solver.
The output of the solver is the repair values of each data points and can be the ``optimal'' solution under our problem definition.
%R1.O4(3)
However, the choice of vanilla output as repaired results can still be improved, as we only minimize the number of repaired points but do not consider the magnitude of the changes, which can lead to a repair that is too far from the original point and less accurate.
\end{revised}

\begin{example}[Global optimal by solver, Example \ref{example:dirty} continued]
\label{example:optimization-solver}
Consider again the time series $\boldsymbol{\mathit{x}}$ and the speed constraint $\mathit{s} = 1$ in Example \ref{example:dirty} with window size $\mathit{w} = 7$, as illustrated in Figure \ref{fig:example1}.
\\
According to formula (\ref{equation:global-max}), the constraint declared w.r.t. $\mathit{s}$ are:
\begin{align*} &
\frac{d(\boldsymbol{\mathit{x}}'_1, \boldsymbol{\mathit{x}}'_2)}{2-1} \leq 1, & &
\frac{d(\boldsymbol{\mathit{x}}'_1, \boldsymbol{\mathit{x}}'_3)}{3-1} \leq 1, & &
\frac{d(\boldsymbol{\mathit{x}}'_1, \boldsymbol{\mathit{x}}'_4)}{4-1} \leq 1, & &
\frac{d(\boldsymbol{\mathit{x}}'_1, \boldsymbol{\mathit{x}}'_5)}{5-1} \leq 1, \\ &
\ldots \\ &
\frac{d(\boldsymbol{\mathit{x}}'_4, \boldsymbol{\mathit{x}}'_7)}{7-4} \leq 1, & &
\frac{d(\boldsymbol{\mathit{x}}'_5, \boldsymbol{\mathit{x}}'_6)}{6-5} \leq 1, & &
\frac{d(\boldsymbol{\mathit{x}}'_5, \boldsymbol{\mathit{x}}'_7)}{7-5} \leq 1, & &
\frac{d(\boldsymbol{\mathit{x}}'_6, \boldsymbol{\mathit{x}}'_7)}{7-6} \leq 1. 
\end{align*}
We use Gurobi Optimizer to get the global optimal solution with the minimum fix is $\boldsymbol{\mathit{x}}'_2=(1.88,0.66), \boldsymbol{\mathit{x}}'_4=(3.59,0.95)$, as shown by the red points in Figure \ref{fig:example2}.
The repair number is 2.
\end{example}

\subsubsection{Dynamic Programming}
\label{sect:global-dp}

\begin{algorithm}[!ht]
\caption{\textsf{MTCSC-G}($\boldsymbol{\mathit{x}}, \mathit{s}$)}
\label{alg:global}
  \KwIn{time series $\boldsymbol{\mathit{x}}$, speed constraint $\mathit{s}$}
  \KwOut{index of points that need to be fixed $FixList$}
    $dp[1..n] \leftarrow \{1\}$, $preIndex[1..n] \leftarrow \{0\}$\;
    $maxLength \leftarrow 0$, $endIndex \leftarrow 0$\;
    
    \For{$i \leftarrow 0$ to $n-1 \do$}{
      \For{$j \leftarrow 0$ to $i-1 \do$}{
        \If{\textsf{satisfy}$(\boldsymbol{\mathit{x}}_i, \boldsymbol{\mathit{x}}_j)$ \textbf{and} $dp[i] < dp[j] + 1$}{
          $dp[i] \leftarrow dp[j] \ + \ 1$\;
          $preIndex[i] \leftarrow j$\;
        }
      }
      \If{$dp[i] > maxLength$}{
        $maxLength \leftarrow dp[i]$\;
        $endIndex \leftarrow i$\;
      }
    }
    \For{$k \leftarrow 1$ to $maxLength$ \do}{
      $CleanList.add(endIndex)$\;
      $endIndex \leftarrow preIndex[endIndex]$\;
    }
    $FixList \leftarrow \{1,\ldots n\} - CleanList$\;
    \Return $FixList$\;
\end{algorithm}

%Based on the LIS problem (Longest Increasing Subsequence) \cite{schensted1961longest}, we propose Algorithm \ref{alg:global} to determine which data points need to be fixed, 
%\begin{revised}
%which is in fact the detection of violations.
%\end{revised}
%Lines 3 to 10 find the longest subsequence in the entire time series that satisfies the given speed constraint $\mathit{s}$ in 
%\begin{revised}
%$O(Dn^2)$ time (The judgment procedure \textsf{satisfy}$(\boldsymbol{\mathit{x}}_i, \boldsymbol{\mathit{x}}_j)$ will cost $O(D)$ time where D is the dimension of time series.).
%\end{revised}
%Lines 11 to 13 record these clean data points and Line 14 records the indices of the points to be fixed in $O(n)$ time.
%It is easy to see that the time complexity of Algorithm \ref{alg:global} is 
%\begin{revised}
%$O(Dn^2)$.
%\end{revised}
%Finally, for each point in the $FixList$, we can choose a value that satisfies the speed constraint with its previous and succeeding points.
%(The correctness can be verified in Proposition \ref{the:pre-range} and Proposition \ref{the:post-range} in Section \ref{sect:stream}).
\begin{revised}
The optimization solver can lead to a global optimal solution, but it costs too much time.
Therefore, we propose Algorithm \ref{alg:global} to determine which data points need to be fixed to improve efficiency.
It is an extension of the longest increasing subsequence problem (LIS) \cite{schensted1961longest}.
Our goal is to find the longest subsequence in the entire time series that satisfies the given speed constraint, which is equivalent to the LIS problem of finding the longest increasing subsequence.
Once it is found (i.e. \emph{CleanList}), the data points that are not included in this list must be repaired and are guaranteed to be the least modified (i.e. minimum fix).
Finally, for each point in \emph{FixList}, we can easily find the nearest preceding point $\boldsymbol{\mathit{x}}_p$ and the nearest succeeding point $\boldsymbol{\mathit{x}}_m$ in \emph{CleanList}.
The interpolation of these two points according to the formula (\ref{equation:local-stream}) is used as the final repair result.
The correctness can be verified in Proposition \ref{the:pre-range} and Proposition \ref{the:post-range} in Section \ref{sect:stream}. 
As for the time analysis, we know that the judgment procedure \textsf{satisfy}$(\boldsymbol{\mathit{x}}_i, \boldsymbol{\mathit{x}}_j)$ will cost $O(D)$ time where D is the dimension of time series.
Algorithm \ref{alg:global} outputs the \emph{FixList}, which is in fact the detection of violations and the detection time is $O(Dn^2)$.
The overall time complexity is $O(Dn^2) + O(fD) = O(Dn^2)$ where $f$ is the number of all the points need to be fixed and the worst-case scenario is $f = n/2$.
\end{revised}

\begin{example}[Global optimal by DP, Example \ref{example:dirty} continued]
\label{example:global-dp}
Consider again the time series $\boldsymbol{\mathit{x}}$ and the speed constraint $\mathit{s} = 1$ in Example \ref{example:dirty} with window size $\mathit{w} = 7$.
We use Algorithm \ref{alg:global} and get the FixList is \{2,4\}.
This means that the minimum fix repair requires the repair of $\boldsymbol{\mathit{x}}_2$ and $\boldsymbol{\mathit{x}}_4$.
The clean list is therefore \{1,3,5,6,7\}.
We use the ``clean'' points (considered clean by Algorithm \ref{alg:global}) closest to the detected points on both sides for repair.
So we use $\boldsymbol{\mathit{x}}_1$ and $\boldsymbol{\mathit{x}}_3$ to repair $\boldsymbol{\mathit{x}}_2$, similarly, we use $\boldsymbol{\mathit{x}}_3$ and $\boldsymbol{\mathit{x}}_5$ to repair $\boldsymbol{\mathit{x}}_4$.
The final repaired points are:
\begin{align*} &
\mathit{x}^{1'}_2 = \frac{\mathit{t}_2-\mathit{t}_1}{\mathit{t}_3-\mathit{t}_1}  (2.6-1) + 1 = 1.8, & &
\mathit{x}^{2'}_2 = \frac{\mathit{t}_2-\mathit{t}_1}{\mathit{t}_3-\mathit{t}_1}  (1-1) + 1 = 1, \\ &
\mathit{x}^{1'}_4 = \frac{\mathit{t}_4-\mathit{t}_3}{\mathit{t}_5-\mathit{t}_3}  (4.5-2.6) + 2.6 = 3.55, & &
\mathit{x}^{2'}_4 = \frac{\mathit{t}_4-\mathit{t}_3}{\mathit{t}_5-\mathit{t}_3}  (1-1) + 1 = 1.
\end{align*}
The global optimal solution with Algorithm \ref{alg:global} with the minimum fix is $\boldsymbol{\mathit{x}}'_2=(1.8,1), \boldsymbol{\mathit{x}}'_4=(3.55,1)$, shown as blue points in Figure \ref{fig:example2}.
The repair number is 2, which matches the results of the solver.
\end{example}

\subsection{Local Optimal}
\label{sect:local}

The global optimal requires the entire time series as input and cannot support online cleaning.
Therefore, we first study the local optimal, which concerns only the constraints locally in a window.
We say a data point $\boldsymbol{\mathit{x}}_k$ \emph{locally satisfies} the speed constraint $\mathit{s}$, denoted by $\boldsymbol{\mathit{x}}_k\vDash\mathit{s}$, 
if for any $\boldsymbol{\mathit{x}}_i$ in the window starting from $\boldsymbol{\mathit{x}}_k$, 
i.e., $\mathit{t}_k<\mathit{t}_i\leq\mathit{t}_k+\mathit{w}$, it has 
$\textsf{satisfy}(\boldsymbol{\mathit{x}}'_i, \boldsymbol{\mathit{x}}'_k)$.
%$\mathit{s}_{\min}\leq\frac{d(\boldsymbol{\mathit{x}}'_i, \boldsymbol{\mathit{x}}'_k)}{\mathit{t}_i-\mathit{t}_k}\leq\mathit{s}_{\max}.$
%$\mathit{s}_{\min}(\mathit{t}_k-\mathit{t}_i)\leq\mathit{x}_k-\mathit{x}_i\leq\mathit{s}_{\max}(\mathit{t}_k-\mathit{t}_i)$.

\begin{problem}
Given a data point $\boldsymbol{\mathit{x}}_k$ in a time series $\boldsymbol{\mathit{x}}$ and a speed constraint $\mathit{s}$,
the local optimal repair problem is to find a repair $\boldsymbol{\mathit{x}}'$ such that $\boldsymbol{\mathit{x}}'_k\vDash\mathit{s}$ and $\Delta(\boldsymbol{\mathit{x}}', \boldsymbol{\mathit{x}})$ is minimized.
\end{problem}

We formalize the local optimal repair problem as
\begin{align}
\label{equation:local}
\min\quad & \sum_{i=1}^{n}\mathit{z}_i, & & \mathit{z}_i \in \{0, 1\} 
\\ \nonumber
\mathrm{s.t.}\quad & 
\frac{d(\boldsymbol{\mathit{x}}'_k,\boldsymbol{\mathit{x}}'_i)}{\mathit{t}_k-\mathit{t}_i} \leq\mathit{s}, &  & 
\mathit{t}_k<\mathit{t}_i\leq\mathit{t}_k+\mathit{w},  
 1\leq i\leq n 
\\ \nonumber
%& 
%\frac{d(\boldsymbol{\mathit{x}}'_k,\boldsymbol{\mathit{x}}'_i)}{\mathit{t}_k-\mathit{t}_i} \geq\mathit{s}_{\min}, &  & 
%\mathit{t}_k<\mathit{t}_i\leq\mathit{t}_k+\mathit{w}, 
% 1\leq i\leq n
%\\ \nonumber
%&
%\mathit{x}^{\ell}_i-M\cdot\mathit{z}_i\leq\mathit{x}^{\ell'}_i\leq\mathit{x}^{\ell}_i+M\cdot\mathit{z}_i, &  &
%M>c, \\
%\nonumber
%& & & \mathit{t}_k<\mathit{t}_i\leq\mathit{t}_k+\mathit{w}, 1\leq\ell\leq D
&
\mathit{x}^{\ell'}_i\in[\mathit{x}^{\ell}_i\pm M\cdot\mathit{z}_i], &  &
M>c, \mathit{t}_k<\mathit{t}_i\leq\mathit{t}_k+\mathit{w}, 1\leq\ell\leq D
\end{align}
Similarly, $\mathit{z}_i, \mathit{x}^{\ell'}_i, 1\leq i\leq n, 1\leq\ell\leq D$ are variables in problem solving in the existing solver.
We only change data points $i$ with $\mathit{t}_k\leq\mathit{t}_i\leq\mathit{t}_k+\mathit{w}$ in the window starting from the current $\boldsymbol{\mathit{x}}_k$ under the local optimal setting and therefore only have much fewer variables.
Nevertheless, the speed constraints are still satisfied in the given window.
Moreover, we can also develop a greedy algorithm whose time complexity is $O(n)$.
(See Section \ref{sect:local-stream} for more details.)

\begin{example}[Local optimal, Example \ref{example:dirty} continued]
Consider again the time series $\boldsymbol{\mathit{x}}$ and the speed constraint $\mathit{s} = 1$ in Example \ref{example:dirty} with window size $\mathit{w} = 7$, as illustrated in Figure \ref{fig:example1}.
\\
Let $\mathit{t}_k=1$, i.e. $\mathit{x}_1=(1,1)$ be the currently considered data point.
According to formula (\ref{equation:local}), the constraint for the local optimal on $\boldsymbol{\mathit{x}}_1$ are:
\begin{align*} &
\frac{d(\boldsymbol{\mathit{x}}'_1, \boldsymbol{\mathit{x}}'_2)}{2-1} \leq 1, & &
\frac{d(\boldsymbol{\mathit{x}}'_1, \boldsymbol{\mathit{x}}'_3)}{3-1} \leq 1, & &
\frac{d(\boldsymbol{\mathit{x}}'_1, \boldsymbol{\mathit{x}}'_4)}{4-1} \leq 1, \\ &
\frac{d(\boldsymbol{\mathit{x}}'_1, \boldsymbol{\mathit{x}}'_5)}{5-1} \leq 1, & &
\frac{d(\boldsymbol{\mathit{x}}'_1, \boldsymbol{\mathit{x}}'_6)}{6-1} \leq 1, & &
\frac{d(\boldsymbol{\mathit{x}}'_1, \boldsymbol{\mathit{x}}'_7)}{7-1} \leq 1.
\end{align*}
In the local optimal, only $\frac{d(\boldsymbol{\mathit{x}}_1, \boldsymbol{\mathit{x}}_2)}{2-1} \approx 1.33 > 1$ violates the speed constraint, 
while $\frac{d(\boldsymbol{\mathit{x}}_1, \boldsymbol{\mathit{x}}_4)}{4-1} = 2.4 < 3$ satisfies the speed constraint.
So we just need to repair $\boldsymbol{\mathit{x}}_2$.
We use $\boldsymbol{\mathit{x}}_k$, i.e. $\boldsymbol{\mathit{x}}_1$ and the first data point (i.e. $\boldsymbol{\mathit{x}}_3$) after $\boldsymbol{\mathit{x}}_2$ that satisfies the speed constraint with  $\boldsymbol{\mathit{x}}_1$ to repair $\boldsymbol{\mathit{x}}_2$.
%(See Section \ref{sect:local-stream} for more details.)
\begin{align*} &
\mathit{x}^{1'}_2 = \frac{\mathit{t}_2-\mathit{t}_1}{\mathit{t}_3-\mathit{t}_1} \cdot (2.6-1) + 1 = 1.8, & &
\mathit{x}^{2'}_2 = \frac{\mathit{t}_2-\mathit{t}_1}{\mathit{t}_3-\mathit{t}_1} \cdot (1-1) + 1 = 1. & &
\end{align*}
The local optimal solution with the minimum fix is $\mathit{x}'_2=(1.8,1)$.
The repair number is 1, which is less than the global optimal because the constraints of the local optimal are fewer than those of the global optimal.
\end{example}

%--------------------------------------------------
% !TEX root = ../speed.tex

\section{Streaming Computation}
\label{sect:stream}

A time series can also have an infinite number of data points, i.e. it arrives continuously.
To support online cleaning, we first propose \textsf{MTCSC-L} w.r.t. local optimal by deciding whether the current data point should be fixed or not in Section \ref{sect:local-stream}.
To support cleaning errors that satisfy the speed constraint, we further devise \textsf{MTCSC-C}, which takes into account the trend of data points in the current window in Section \ref{sect:cluster}.
By sliding windows in the time series, both the result of \textsf{MTCSC-L} $\mathit{x}_\text{local}$ and \textsf{MTCSC-C} $\mathit{x}_\text{clu}$ guarantee to satisfy the given speed constraints.
Since the global optimal provides a minimum fixed repair $\mathit{x}_\text{global}$ over the entire data, we always have
$\Delta(\boldsymbol{\mathit{x}}, \mathit{x}_\text{global}) \leq \Delta(\boldsymbol{\mathit{x}}, \mathit{x}_\text{local})$ and
$\Delta(\boldsymbol{\mathit{x}}, \mathit{x}_\text{global}) \leq \Delta(\boldsymbol{\mathit{x}}, \mathit{x}_\text{clu})$.
Unfortunately, we do not have a theoretical upper bound of local and cluster fix number compared to the global ones. In practice, the fix numbers are very close to each other (as shown in Figure \ref{exp:stock}(d)).

\subsection{Local Streaming}
\label{sect:local-stream}

The proposed \textsf{MTCSC-L} is to iteratively determine the local optimal $\boldsymbol{\mathit{x}}'_k$, for $k=1,2,\dots$.
Consider $\boldsymbol{\mathit{x}}_k$, where $\boldsymbol{\mathit{x}}'_1,\dots,\boldsymbol{\mathit{x}}'_{k-1}$ have been determined before.
Referring to the speed constraints, each fixed $\boldsymbol{\mathit{x}}'_{j}, 
\mathit{t}_k-\mathit{w}\leq\mathit{t}_j<\mathit{t}_k,
1\leq j<k,$ provides a range of candidates for $\boldsymbol{\mathit{x}}'_k$, i.e., 
$\{(\mathit{x}^{1'}_k, \ldots, \mathit{x}^{D'}_k) \mid d(\boldsymbol{\mathit{x}}'_k, \boldsymbol{\mathit{x}}'_j)\leq\mathit{s}(\mathit{t}_k-\mathit{t}_j)\}$.
%
%Similar to the univariate case in \cite{DBLP:conf/sigmod/SongZWY15}, 
The following proposition states that considering the last $\boldsymbol{\mathit{x}}'_{k-1}$ is sufficient to determine the range of possible repairs for $\boldsymbol{\mathit{x}}'_{k}$.
The rationale is that for any $1\leq j<i<k$, $\boldsymbol{\mathit{x}}'_i$ should be in the range specified by $\boldsymbol{\mathit{x}}'_j$ as well. 
In other words, the candidate range of $\mathit{x}'_{k}$ specified by $\mathit{x}'_i$ is subsumed in the range by $\mathit{x}'_j$. 

\begin{proposition}
\label{the:pre-range}
For any $1\leq j<i<k, \mathit{t}_k-\mathit{w}\leq\mathit{t}_j<\mathit{t}_i<\mathit{t}_k$, we have 
$
\{\boldsymbol{\mathit{x}}'_k \mid d(\boldsymbol{\mathit{x}}'_k, \boldsymbol{\mathit{x}}'_i)\leq\mathit{s}(\mathit{t}_k-\mathit{t}_i)\}
\subset
\{\boldsymbol{\mathit{x}}'_k \mid d(\boldsymbol{\mathit{x}}'_k, \boldsymbol{\mathit{x}}'_j)\leq\mathit{s}(\mathit{t}_k-\mathit{t}_j)\}.
$
\end{proposition}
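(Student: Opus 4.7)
The plan is to reduce the claim to the triangle inequality for the Euclidean distance, using the fact that $\boldsymbol{\mathit{x}}'_i$ has already been accepted as a repair compatible with $\boldsymbol{\mathit{x}}'_j$ in an earlier iteration. Concretely, because the streaming algorithm produces repairs that locally satisfy $\mathit{s}$ as each point is processed, by the time $\boldsymbol{\mathit{x}}'_k$ is considered we may assume $\textsf{satisfy}(\boldsymbol{\mathit{x}}'_j,\boldsymbol{\mathit{x}}'_i)$ holds, i.e. $d(\boldsymbol{\mathit{x}}'_i,\boldsymbol{\mathit{x}}'_j)\leq \mathit{s}(\mathit{t}_i-\mathit{t}_j)$.

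With that in hand, I would take an arbitrary $\boldsymbol{\mathit{x}}'_k$ in the left-hand set and show it lies in the right-hand set. Let $\boldsymbol{\mathit{x}}'_k$ be such that $d(\boldsymbol{\mathit{x}}'_k,\boldsymbol{\mathit{x}}'_i)\leq\mathit{s}(\mathit{t}_k-\mathit{t}_i)$. Applying the triangle inequality of the Euclidean metric and then substituting both bounds,
\begin{align*}
d(\boldsymbol{\mathit{x}}'_k,\boldsymbol{\mathit{x}}'_j)
 &\leq d(\boldsymbol{\mathit{x}}'_k,\boldsymbol{\mathit{x}}'_i)+d(\boldsymbol{\mathit{x}}'_i,\boldsymbol{\mathit{x}}'_j)\\
 &\leq \mathit{s}(\mathit{t}_k-\mathit{t}_i)+\mathit{s}(\mathit{t}_i-\mathit{t}_j) \;=\; \mathit{s}(\mathit{t}_k-\mathit{t}_j),
\end{align*}
which is exactly the membership condition for the right-hand set. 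This gives the inclusion $\subseteq$.

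To upgrade $\subseteq$ to the strict inclusion $\subset$ as stated, I would argue that the right-hand ball always contains a point not in the left-hand ball. Since $\mathit{t}_j<\mathit{t}_i$, the right-hand ball has strictly larger radius $\mathit{s}(\mathit{t}_k-\mathit{t}_j)>\mathit{s}(\mathit{t}_k-\mathit{t}_i)$, so one can exhibit a witness (for example, any point on the ray from $\boldsymbol{\mathit{x}}'_i$ through $\boldsymbol{\mathit{x}}'_j$ at distance just under $\mathit{s}(\mathit{t}_k-\mathit{t}_j)$ from $\boldsymbol{\mathit{x}}'_j$) that sits in the outer ball but outside the inner ball; this uses only that $D\geq 1$ so the ambient space has enough room.

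The only delicate point is the preceding assumption that $\boldsymbol{\mathit{x}}'_j$ and $\boldsymbol{\mathit{x}}'_i$ already satisfy the speed constraint with each other. This is where the proof relies on the inductive invariant maintained by the streaming procedure rather than on any property of the raw observations $\boldsymbol{\mathit{x}}_j,\boldsymbol{\mathit{x}}_i$; I would state this invariant explicitly at the start of the proof and then the triangle-inequality computation above closes the argument in one line.
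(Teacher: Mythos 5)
Your proof is correct and follows essentially the same route as the paper's: both rest on the invariant that the already-fixed repairs satisfy $d(\boldsymbol{\mathit{x}}'_i, \boldsymbol{\mathit{x}}'_j)\leq\mathit{s}(\mathit{t}_i-\mathit{t}_j)$ and then close the argument with a single application of the triangle inequality. Your additional witness argument for the strictness of the inclusion is a point the paper silently skips (it effectively only establishes $\subseteq$), so that part is a small but welcome refinement rather than a different approach.
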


\begin{proof}
W.l.o.g. we take the two dimensional data $\boldsymbol{\mathit{x}}_k = \{(\mathit{x}_k, \mathit{y}_k)\}$ for all the proofs.
Since $\boldsymbol{\mathit{x}}'_i$ and $\boldsymbol{\mathit{x}}'_j$ are all fixed and in the same window, we have 
$d(\boldsymbol{\mathit{x}}'_i, \boldsymbol{\mathit{x}}'_j)\leq\mathit{s}(\mathit{t}_i-\mathit{t}_j)$.
The range specified by
$\boldsymbol{\mathit{x}}_k' = \{(\mathit{x}, \mathit{y}) \mid \sqrt{(\mathit{x}-\mathit{x}'_i)^2+(\mathit{y}-\mathit{y}'_i)^2}\leq\mathit{s}(\mathit{t}_k-\mathit{t}_i)\}$, i.e., 
$d(\boldsymbol{\mathit{x}}'_k, \boldsymbol{\mathit{x}}'_i) \leq \mathit{s}(\mathit{t}_k-\mathit{t}_i)$.
Points $\boldsymbol{\mathit{x}}'_k,\boldsymbol{\mathit{x}}'_i,\boldsymbol{\mathit{x}}'_j$ can form a triangle or a line.
Following the triangle inequality \cite{khamsi2011introduction}, we have
$d(\boldsymbol{\mathit{x}}'_k, \boldsymbol{\mathit{x}}'_j) 
\leq d(\boldsymbol{\mathit{x}}'_k, \boldsymbol{\mathit{x}}'_i) + d(\boldsymbol{\mathit{x}}'_i, \boldsymbol{\mathit{x}}'_j) 
\leq \mathit{s}(\mathit{t}_k-\mathit{t}_i) + \mathit{s}(\mathit{t}_i-\mathit{t}_j) 
= \mathit{s}(\mathit{t}_k-\mathit{t}_j)$.
Hence, $\boldsymbol{\mathit{x}}'_k$ is in the range specified by $\boldsymbol{\mathit{x}}'_j$.
\end{proof}
We can therefore identify the repair $\boldsymbol{\mathit{x}}'_k$ in a tight range of candidates specified by $\boldsymbol{\mathit{x}}'_{k-1}$.
However, the above ranges are still too large even if the dimension of the data is two (a circle of radius $s(\mathit{t}_k-\mathit{t}_{k-1})$).
Fortunately, we find that the sequence of speed changes is stationary for a short period of time \cite{DBLP:journals/vldb/WangZSW24}.
Then we use the idea of interpolation to determine the local solution of $\boldsymbol{\mathit{x}}'_k$ as
\begin{equation}
\label{equation:local-stream}
\mathit{x}_k^{\ell'}=\alpha\cdot(\mathit{x}_{m}^{\ell}-\mathit{x}_{p}^{\ell'})+\mathit{x}_{p}^{\ell'}, \quad 1\leq\ell\leq D
\end{equation}
where $\alpha = (\mathit{t}_{k}-\mathit{t}_{p})/(\mathit{t}_{m}-\mathit{t}_{p})$, 
$p$ is the last point before the window, 
$k$ is the first point in the window (key point) and 
$m$ is a point after $k$ with \textsf{satisfy}$(\boldsymbol{\mathit{x}}'_p, \boldsymbol{\mathit{x}}_m)$. 
%to simplify the notations.
%
The following proposition states that such a solution satisfies the given speed constraint.
\begin{proposition}
\label{the:correctness}
%Let $\alpha = (\mathit{t}_{k}-\mathit{t}_{p})/(\mathit{t}_{m}-\mathit{t}_{p})$, %
%\textsf{satisfy}$(\boldsymbol{\mathit{x}}'_p, \boldsymbol{\mathit{x}}_m)$, 
%and for any $\ell=1,\ldots,D$,
For any $\ell=1,\ldots,D$,
$\mathit{x}_k^{\ell'}=\alpha\cdot(\mathit{x}_{m}^{\ell}-\mathit{x}_{p}^{\ell'})+\mathit{x}_{p}^{\ell'}$, 
we have 
\textsf{satisfy}$(\boldsymbol{\mathit{x}}'_p, \boldsymbol{\mathit{x}}'_k)$.
\end{proposition}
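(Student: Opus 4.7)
The plan is a direct algebraic verification that the interpolated point $\boldsymbol{\mathit{x}}'_k$ sits inside the speed‑compatible ball around $\boldsymbol{\mathit{x}}'_p$, by reducing the distance $d(\boldsymbol{\mathit{x}}'_p, \boldsymbol{\mathit{x}}'_k)$ to a scaled copy of the already‑bounded distance $d(\boldsymbol{\mathit{x}}'_p, \boldsymbol{\mathit{x}}_m)$.

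First I would subtract $\mathit{x}_p^{\ell'}$ from both sides of the definition to obtain the per‑dimension identity $\mathit{x}_k^{\ell'} - \mathit{x}_p^{\ell'} = \alpha \cdot (\mathit{x}_m^{\ell} - \mathit{x}_p^{\ell'})$. Because $\alpha = (\mathit{t}_k-\mathit{t}_p)/(\mathit{t}_m-\mathit{t}_p)$ does not depend on $\ell$, squaring and summing over $\ell=1,\ldots,D$ lets $\alpha^2$ factor out of the sum, giving
\[
d(\boldsymbol{\mathit{x}}'_p, \boldsymbol{\mathit{x}}'_k) \;=\; \sqrt{\sum_{\ell=1}^{D}\alpha^{2}(\mathit{x}_m^{\ell}-\mathit{x}_p^{\ell'})^{2}} \;=\; |\alpha|\cdot d(\boldsymbol{\mathit{x}}'_p, \boldsymbol{\mathit{x}}_m).
\]
Since $\mathit{t}_p<\mathit{t}_k\leq\mathit{t}_m$ by construction ($p$ is the last point before the window, $k$ is the key point in the window, and $m$ is a later point), we have $\alpha\in(0,1]$, so $|\alpha|=\alpha$.

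Second I would invoke the hypothesis $\textsf{satisfy}(\boldsymbol{\mathit{x}}'_p, \boldsymbol{\mathit{x}}_m)$, which by Definition~\ref{definition:speed-constraint} yields $d(\boldsymbol{\mathit{x}}'_p, \boldsymbol{\mathit{x}}_m) \leq \mathit{s}(\mathit{t}_m-\mathit{t}_p)$. Multiplying by $\alpha$ and using $\alpha(\mathit{t}_m-\mathit{t}_p)=\mathit{t}_k-\mathit{t}_p$ gives
\[
d(\boldsymbol{\mathit{x}}'_p, \boldsymbol{\mathit{x}}'_k) \;\leq\; \alpha\cdot \mathit{s}(\mathit{t}_m-\mathit{t}_p) \;=\; \mathit{s}(\mathit{t}_k-\mathit{t}_p),
\]
i.e. the required speed bound $d(\boldsymbol{\mathit{x}}'_p, \boldsymbol{\mathit{x}}'_k)/(\mathit{t}_k-\mathit{t}_p)\leq \mathit{s}$. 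The window condition $0<\mathit{t}_k-\mathit{t}_p\leq \mathit{w}$ is inherited from the fact that $p$ is chosen as the last fixed point lying in the window preceding $k$, so $\textsf{satisfy}(\boldsymbol{\mathit{x}}'_p, \boldsymbol{\mathit{x}}'_k)$ follows.

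There is no real obstacle here; the only subtlety worth flagging is that the factorization step requires $\alpha$ to be a nonnegative scalar common to all coordinates, which is exactly why interpolation (rather than any other convex combination) is used, and that the hypothesis $\textsf{satisfy}(\boldsymbol{\mathit{x}}'_p, \boldsymbol{\mathit{x}}_m)$ bounds the distance to the \emph{original} $\boldsymbol{\mathit{x}}_m$ rather than to a repaired value, which is precisely what the construction of the candidate $m$ guarantees in formula~(\ref{equation:local-stream}).
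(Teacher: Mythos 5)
Your proof is correct, and it takes a cleaner route than the paper's. The paper spends most of its argument on per-coordinate interval bounds, showing $\mathit{x}_p^{\ell'}-\mathit{r}_1\leq\mathit{x}_k^{\ell'}\leq\mathit{x}_p^{\ell'}+\mathit{r}_1$ with $\mathit{r}_1=\mathit{s}(\mathit{t}_k-\mathit{t}_p)$ for each $\ell$, and only then remarks that $\boldsymbol{\mathit{x}}'_k$ is collinear with $\boldsymbol{\mathit{x}}'_p$ and $\boldsymbol{\mathit{x}}_m$ to conclude. Those coordinatewise bounds are not by themselves sufficient for the Euclidean claim (they only give $d(\boldsymbol{\mathit{x}}'_p,\boldsymbol{\mathit{x}}'_k)\leq\sqrt{D}\,\mathit{r}_1$); the collinearity remark is what actually carries the proof, and your argument is precisely the rigorous version of that remark: since $\alpha$ is a common nonnegative scalar across coordinates, it factors out of the norm, giving $d(\boldsymbol{\mathit{x}}'_p,\boldsymbol{\mathit{x}}'_k)=\alpha\, d(\boldsymbol{\mathit{x}}'_p,\boldsymbol{\mathit{x}}_m)\leq\alpha\,\mathit{s}(\mathit{t}_m-\mathit{t}_p)=\mathit{s}(\mathit{t}_k-\mathit{t}_p)$. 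Your version buys a shorter proof that does not need the interval arithmetic at all, and it makes explicit the two hypotheses the paper leaves implicit, namely $\alpha\in(0,1]$ from the ordering $\mathit{t}_p<\mathit{t}_k\leq\mathit{t}_m$ and the window condition $0<\mathit{t}_k-\mathit{t}_p\leq\mathit{w}$ needed for $\textsf{satisfy}(\boldsymbol{\mathit{x}}'_p,\boldsymbol{\mathit{x}}'_k)$ to be well defined.
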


\begin{proof}
Let $\mathit{r}_1=\mathit{s}(\mathit{t}_k-\mathit{t}_p),\mathit{r}_3=\mathit{s}(\mathit{t}_m-\mathit{t}_p)$ and $\ell\in[1, D]$.
Since \textsf{satisfy}$(\boldsymbol{\mathit{x}}'_p, \boldsymbol{\mathit{x}}_m)$, 
we have $\mathit{x}^{\ell}_m\in[\mathit{x}^{\ell'}_p-\mathit{r}_3, \mathit{x}^{\ell'}_p+\mathit{r}_3]$.
\begin{align*}
\mathit{x}_k^{\ell'}
 &= \alpha\cdot(\mathit{x}_{m}^{\ell} - \mathit{x}_{p}^{\ell'})+\mathit{x}_{p}^{\ell'} 
 = \alpha\cdot\mathit{x}_{m}^{\ell} + (1-\alpha)\mathit{x}_{p}^{\ell'} \\
 &\geq \alpha\cdot(\mathit{x}^{\ell'}_p-\mathit{r}_3) + (1-\alpha)\mathit{x}_{p}^{\ell'} 
 = \mathit{x}_{p}^{\ell'} - \alpha\cdot\mathit{r}_3 \\
 &= \mathit{x}_{p}^{\ell'} - \mathit{s}(\mathit{t}_m-\mathit{t}_p)\cdot(\mathit{t}_{k}-\mathit{t}_{p})/(\mathit{t}_{m}-\mathit{t}_{p}) \\
 &= \mathit{x}_{p}^{\ell'} - \mathit{s}(\mathit{t}_{k}-\mathit{t}_{p}) 
 = \mathit{x}_{p}^{\ell'} - \mathit{r}_1
\end{align*}
Similarly, we have $\mathit{x}_k^{\ell'}\leq\mathit{x}^{\ell'}_p+\mathit{r}_1$.
%Hence, for any $\ell=1,\ldots,D$, we have $\mathit{x}^{\ell'}_k\in[\mathit{x}^{\ell'}_p-\mathit{r}_1, \mathit{x}^{\ell'}_p+\mathit{r}_1]$.
Since point $\boldsymbol{\mathit{x}}'_k$ is in the same line with points $\boldsymbol{\mathit{x}}'_p$ and $\boldsymbol{\mathit{x}}_m$, hence, we have \textsf{satisfy}$(\boldsymbol{\mathit{x}}'_p, \boldsymbol{\mathit{x}}'_k)$.
\end{proof}

Algorithm \ref{alg:local} presents the online local repair of a time series $\boldsymbol{\mathit{x}}$ w.r.t.\ local optimal under the speed constraint $\mathit{s}$. 
\begin{revised}
It detects the violation for each data point in Line 2 and the time cost is $O(D)$.
\end{revised}
For each point $k$, $k=1,2,\dots,n$ that violates $\mathit{s}$ w.r.t.\ point $k-1$, 
Line 4 will find the first succeeding point $i$ that \textsf{satisfy}$(\boldsymbol{\mathit{x}}_i, \boldsymbol{\mathit{x}}'_{k-1})$.
Lines 5 and 6 indicate that if there is no point satisfying $\mathit{s}$ w.r.t.\ point $k-1$, 
$\boldsymbol{\mathit{x}}'_{k-1}$ will be determined as the repair of point $k$.
Otherwise, Lines 8 to 12 will set $\boldsymbol{\mathit{x}}'_{k}$ following formula (\ref{equation:local-stream}).
It is easy to know that the time complexity of Algorithm \ref{alg:local} is 
\begin{revised}
$O(wDn)$.
\end{revised}

\begin{algorithm}[!ht]
\caption{\textsf{MTCSC-L}($\boldsymbol{\mathit{x}}, \mathit{s}$)}
\label{alg:local}
  \KwIn{time series $\boldsymbol{\mathit{x}}$, speed constraint $\mathit{s}$}
  \KwOut{a repair $\boldsymbol{\mathit{x'}}$ of $\boldsymbol{\mathit{x}}$ w.r.t local optimal}
    \For{$k \leftarrow 1$ to $n$ \do}{
      \If{\textsf{satisfy}$(\boldsymbol{\mathit{x}}_k, \boldsymbol{\mathit{x}}'_{k-1})$ \textbf{or} $k=1$} {
        \Continue\;
	  }
	  \For{$i \leftarrow k+1$ to $n$ \do}{
	    \If{$t_{i} > t_{k}+\mathit{w}$}{
	      $\boldsymbol{\mathit{x}}_k' \leftarrow \boldsymbol{\mathit{x}}_{k-1}'$\;
	      \Break\;
		 }
		 \If{\textsf{satisfy}$(\boldsymbol{\mathit{x}}_i, \boldsymbol{\mathit{x}}'_{k-1})$}{
		   $\alpha \leftarrow (\mathit{t}_{k}-\mathit{t}_{k-1})/(\mathit{t}_{i}-\mathit{t}_{k-1})$\;
		   \For{$\ell\leftarrow 1$ to $D$ \do}{
		     $\mathit{x}_k^{\ell'}\leftarrow\alpha\cdot(\mathit{x}_{i}^{\ell}-\mathit{x}_{k-1}^{\ell'}) + \mathit{x}_{k-1}^{\ell'}$\;
	       }
	       \Break\;
	     }
	  }
	}
	
	\Return $\boldsymbol{\mathit{x'}}$\;
\end{algorithm}

\begin{example}[Local streaming, Example \ref{example:dirty} continued]
Consider again the time series $\boldsymbol{\mathit{x}}$ and the speed constraints $\mathit{s} = 1$ in Example \ref{example:dirty} with window size $\mathit{w} = 2$.

Let $\mathit{t}_k=2$, i.e. $\boldsymbol{\mathit{x}}_2=(1.8,1.8)$ be the key point, then the current window is \{$\boldsymbol{\mathit{x}}_2, \boldsymbol{\mathit{x}}_3, \boldsymbol{\mathit{x}}_4$\}.
According to Algorithm \ref{alg:local}, since $\boldsymbol{\mathit{x}}_2$ violates the speed constraint with $\boldsymbol{\mathit{x}}'_1=(1,1)$, we find the first succeeding point $i$ that \textsf{satisfy}$(\boldsymbol{\mathit{x}}_i, \boldsymbol{\mathit{x}}'_{1})$, i.e. $\boldsymbol{\mathit{x}}_3=(2.6,1)$.
A repair on $\boldsymbol{\mathit{x}}_2$ is :
\begin{align*} &
\mathit{x}^{1'}_2 = \frac{\mathit{t}_2-\mathit{t}_1}{\mathit{t}_3-\mathit{t}_1}  (2.6-1) + 1 = 1.8, & &
\mathit{x}^{2'}_2 = \frac{\mathit{t}_2-\mathit{t}_1}{\mathit{t}_3-\mathit{t}_1}  (1-1) + 1 = 1.
\end{align*}
When the window is \{$\boldsymbol{\mathit{x}}_5, \boldsymbol{\mathit{x}}_6, \boldsymbol{\mathit{x}}_7$\}, a repair on $\boldsymbol{\mathit{x}}_5$ is :
\begin{align*} &
\mathit{x}^{1'}_5 = \frac{\mathit{t}_5-\mathit{t}_4}{\mathit{t}_7-\mathit{t}_4}  (6.4-3.4) + 3.4 = 4.4, & &
\mathit{x}^{2'}_5 = \frac{\mathit{t}_5-\mathit{t}_4}{\mathit{t}_7-\mathit{t}_4}  (1-1) + 1 = 1.
\end{align*}
Similarly, a repair on $\boldsymbol{\mathit{x}}_6$ is performed, i.e. $\boldsymbol{\mathit{x}}_6'=(5.4,1)$.
Hence, the repair number is 3 and we have 
$\Delta(\boldsymbol{\mathit{x}}, \mathit{x}_\text{global}) = 2 \leq \Delta(\boldsymbol{\mathit{x}}, \mathit{x}_\text{local}) = 3$.
\end{example}

\subsection{Online Clustering}
\label{sect:cluster}

As mentioned before, \textsf{MTCSC-L} uses the first succeeding point $m$ with \textsf{satisfy}$(\boldsymbol{\mathit{x}}'_p, \boldsymbol{\mathit{x}}_{m})$ to help repairing.
However, the correctness of the point $m$ is not guaranteed, i.e. the potential outlier that satisfies the speed constraint can mislead the repair result.
Therefore, we propose \textsf{MTCSC-C} to fully utilize the data distribution in the current window.
We first show how to determine the candidate range.
Then we introduce the method \textsf{BuildCluster} to construct the cluster in the given window and explain the concept behind it.
Finally, we describe the whole procedure to obtain the final result.

\subsubsection{Candidate Range}
\label{sect:candidate-range}

Proposition \ref{the:pre-range} shows that we can always have a tighter range specified by the nearest preceding point.
In addition, the following proposition shows that this property also applies to succeeding points.

\begin{proposition}
\label{the:post-range}
For any $i > j > k, \mathit{t}_k<\mathit{t}_j<\mathit{t}_i<\mathit{t}_k+\mathit{w}$,  if \textsf{satisfy}$(\boldsymbol{\mathit{x}}_j, \boldsymbol{\mathit{x}}_i)$, 
then we have
$
\{\boldsymbol{\mathit{x}}'_k \mid d(\boldsymbol{\mathit{x}}'_k, \boldsymbol{\mathit{x}}_j)\leq\mathit{s}(\mathit{t}_j-\mathit{t}_k)\}
\subset
\{\boldsymbol{\mathit{x}}'_k \mid d(\boldsymbol{\mathit{x}}'_k, \boldsymbol{\mathit{x}}_i)\leq\mathit{s}(\mathit{t}_i-\mathit{t}_k)\}.
$
\end{proposition}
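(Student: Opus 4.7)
The plan is to mirror the structure of the proof of Proposition \ref{the:pre-range}, replacing the preceding anchor $\boldsymbol{\mathit{x}}'_j$ with a succeeding anchor $\boldsymbol{\mathit{x}}_i$, and to again rely on the triangle inequality applied to the three points $\boldsymbol{\mathit{x}}'_k, \boldsymbol{\mathit{x}}_j, \boldsymbol{\mathit{x}}_i$ in Euclidean space. The key ingredient made available by the hypothesis is that \textsf{satisfy}$(\boldsymbol{\mathit{x}}_j, \boldsymbol{\mathit{x}}_i)$ gives an upper bound $d(\boldsymbol{\mathit{x}}_j, \boldsymbol{\mathit{x}}_i) \leq \mathit{s}(\mathit{t}_i - \mathit{t}_j)$, which is exactly the ``bridge'' distance needed to turn a bound against $\boldsymbol{\mathit{x}}_j$ into a bound against $\boldsymbol{\mathit{x}}_i$.

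First, I would pick an arbitrary candidate $\boldsymbol{\mathit{x}}'_k$ in the inner set, i.e., satisfying $d(\boldsymbol{\mathit{x}}'_k, \boldsymbol{\mathit{x}}_j) \leq \mathit{s}(\mathit{t}_j - \mathit{t}_k)$. Then I would apply the triangle inequality in the form
\[
d(\boldsymbol{\mathit{x}}'_k, \boldsymbol{\mathit{x}}_i) \;\leq\; d(\boldsymbol{\mathit{x}}'_k, \boldsymbol{\mathit{x}}_j) + d(\boldsymbol{\mathit{x}}_j, \boldsymbol{\mathit{x}}_i) \;\leq\; \mathit{s}(\mathit{t}_j - \mathit{t}_k) + \mathit{s}(\mathit{t}_i - \mathit{t}_j) \;=\; \mathit{s}(\mathit{t}_i - \mathit{t}_k),
\]
so that $\boldsymbol{\mathit{x}}'_k$ also lies in the outer set. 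This establishes the inclusion $\subseteq$ and can be written compactly in two or three lines, exactly analogous to the proof already given for Proposition \ref{the:pre-range}.

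The only nontrivial step, and the place I expect any subtlety to live, is arguing strict containment $\subset$ rather than $\subseteq$. To see strictness, I would exhibit a concrete witness lying in the outer ball but not the inner one: for instance, take the point obtained by extending the segment from $\boldsymbol{\mathit{x}}_j$ through $\boldsymbol{\mathit{x}}_i$ for a further distance of $\mathit{s}(\mathit{t}_i - \mathit{t}_k) - d(\boldsymbol{\mathit{x}}_j, \boldsymbol{\mathit{x}}_i) - \epsilon$ for small $\epsilon > 0$; this point is within $\mathit{s}(\mathit{t}_i - \mathit{t}_k)$ of $\boldsymbol{\mathit{x}}_i$ but farther than $\mathit{s}(\mathit{t}_j - \mathit{t}_k)$ from $\boldsymbol{\mathit{x}}_j$ whenever $t_i > t_j$, which is guaranteed by hypothesis. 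One has to be slightly careful in the degenerate case where $\boldsymbol{\mathit{x}}_j$ and $\boldsymbol{\mathit{x}}_i$ coincide, but then $d(\boldsymbol{\mathit{x}}_j,\boldsymbol{\mathit{x}}_i)=0$ and the two balls are concentric with strictly different radii, so strict inclusion is immediate.

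Overall the proof will be a near copy of the earlier triangle-inequality argument. Having Proposition \ref{the:pre-range} already on the page, I would simply note the symmetric role played by preceding and succeeding anchors and keep the write-up to a single short paragraph plus one displayed chain of inequalities, commenting only briefly on the strictness of the inclusion to avoid redundancy with the previous proposition's proof.
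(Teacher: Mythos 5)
Your proof is correct and takes essentially the same route as the paper: the paper's own proof of this proposition is a one-line deferral to the triangle-inequality argument of Proposition \ref{the:pre-range}, i.e., $d(\boldsymbol{\mathit{x}}'_k, \boldsymbol{\mathit{x}}_i) \leq d(\boldsymbol{\mathit{x}}'_k, \boldsymbol{\mathit{x}}_j) + d(\boldsymbol{\mathit{x}}_j, \boldsymbol{\mathit{x}}_i) \leq \mathit{s}(\mathit{t}_j-\mathit{t}_k) + \mathit{s}(\mathit{t}_i-\mathit{t}_j) = \mathit{s}(\mathit{t}_i-\mathit{t}_k)$, exactly as you write. Your version is in fact more explicit than the paper's, and your additional remark on strict versus non-strict containment addresses a detail the paper (which writes $\subset$ but only argues $\subseteq$) glosses over.
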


\begin{proof}
Referring to the proof of Proposition \ref{the:pre-range}, once 
$\boldsymbol{\mathit{x}}'_k$ is in the range specified by $\boldsymbol{\mathit{x}}_i$,  it is also in the range specified by $\boldsymbol{\mathit{x}}_j$.
\end{proof}

%Since the preceding point $p$ is already fixed, we must find a suitable succeeding point according to Propositions \ref{the:pre-range} and \ref{the:post-range} in order to identify the final candidate range for the key point $k$.
\begin{revised}
Hence, together with the assumption in Section \ref{sect:local-stream}, Proposition \ref{the:pre-range} and Proposition \ref{the:post-range} show that considering the nearest preceding and succeeding point (adjacent points) is sufficient to determine the final candidate range for the key point $k$. 
Since the preceding point $p$ is already fixed, we now turn to find a suitable succeeding point.
\end{revised}

\subsubsection{Cluster Construction}

\begin{algorithm}[!ht]
\caption{\textsf{BuildCluster}($\boldsymbol{\mathit{x}}'_p, \mathit{W}, \mathit{s}$)}
\label{alg:build-cluster}
  \KwIn{
    the last preceding point $\boldsymbol{\mathit{x}}'_p$, 
    the window $\mathit{W}$ starting from key point $\boldsymbol{\mathit{x}}_k$, 
    speed constraint $\mathit{s}$
  }
  \KwOut{cluster set $Clusters$ formed from $\mathit{W}$ except $\boldsymbol{\mathit{x}}_k$}
    $Map<i, Cluster> map \leftarrow \{\}$\;
    $f[1..\mathit{W}.length] \leftarrow \{0\}$\;
    \For{$\ell \leftarrow 1$ to $W.length$ \do}{
      \If{\textsf{satisfy}$(\boldsymbol{\mathit{x}}'_p, \mathit{W}[\ell])$}{
        $f[\ell]\leftarrow -1, map.put(\ell, new\ Cluster(\ell))$\;
        \Break\;
	  }
	}
	\For{$i\leftarrow\ell+1$ to $W.length$}{
      \For{$j \leftarrow i-1$ to $\ell$ \do}{
        \If{\textsf{satisfy}($\mathit{W}[i], \mathit{W}[j])$}{
          \If{$f[j] = -1$}{
            $f[i]\leftarrow j, Cluster(j).add(i)$\;
          } \ElseIf{$f[j] > 0$}{
            $f[i]\leftarrow f[j], Cluster(f[i]).add(i)$\;
          }
          \Break\;
        } \ElseIf{$j=\ell$ \textbf{or} $f[j]>0$} {
          \If{\textsf{satisfy}$(\boldsymbol{\mathit{x}}'_{p}, \mathit{W}[i])$}{
            $f[i] \leftarrow -1, map.put(i, new\ Cluster(i))$\;
          }
          \Break\;
        }
      } % end of For j
	} % end of For i
	\Return $map.values()$\;
\end{algorithm}

Algorithm \ref{alg:build-cluster} obtains the main trend of the data distribution in the window by clustering 
\begin{revised}
with a time complexity of $O(w^2D)$. 
\end{revised}
%Finally, the first data point in the largest cluster is identified as the sentinel.
Line 2 creates an array that records the flag $f$ for each point: 
(1) $0$, the default value (dirty);
(2) $-1$, which is itself the first point of a cluster;
(3) a number greater than 0, the first point of the cluster to which it belongs.
Lines 3 to 6 find the first point $\ell$ that satisfies the speed constraint with the previous fixed point $p$ and create a new cluster containing it.
If no points are satisfied, the algorithm skips the for loop in Line 7 and returns an empty list.
Other points before $\ell$ are omitted as they are not in the range specified by the point $p$.
Starting from Line 7, for each point $i$ after $\ell$, we check its relationship with its preceding point $j = i - 1$ and do one of the four actions:
\begin{enumerate}[
leftmargin=*,
label={\textbf{Action} \arabic*},
%ref={\textbf{Action} (\arabic*)}
]
\item Add $i$ into the cluster containing $j$ (Lines 9 to 14);
\item Create a new cluster for $i$ (Lines 15 to 18);
\item Continue to check before $j$ ($j=j-1$) with $i$ (Line 8);
\item Omit $i$ (Line 7).
\end{enumerate}
%Line 19 return the built clusters.

\paragraph{Explanations}
\begin{revised}
The reasons for Action 1 and Action 4 are obvious, i.e. we try to add `clean' points to existing clusters and omit the dirty points, respectively.
\end{revised}
%The reasons for Action 1 and Action 4 are obvious.

%Action 1 attempts to add ``clean'' points to existing clusters.
%If $i$ is a dirty point but still compatible with the clean point $i-1$, such an insertion has no effect on the result as we only use the first point of the largest cluster to get the repair (see Algorithm \ref{alg:cluster}).

%Action 4 manages to omit the dirty points.
In the following, we will discuss the reasons for other actions in detail, with respect to the three cases where \textsf{Satisfy}$(\mathit{W}[i], \mathit{W}[j=i-1])$ is not satisfied:
case 1: $i$ is dirty but $j$ is clean;
case 2: $i$ is clean but $j$ is dirty;
case 3: $i$ and $j$ are both dirty.
%\begin{enumerate}[
%leftmargin=*,
%label={\textbf{Case} \arabic*},
%]
%\item $i$ is dirty but $j$ is clean;
%\item $i$ is clean but $j$ is dirty;
%\item $i$ and $j$ are both dirty.
%\end{enumerate}

Action 2 creates a new cluster for the point $i$ if $j$ is not the first point of a cluster ($f[j]>0$), or all previous points in the window are incompatible, as long as $i$ is compatible with the fixed point $p$.
(1) In case 1, it guarantees that the dirty point $i$ forms a new cluster and does not collide with the existing clean cluster;
(2) In case 2, the clean point $i$ is not polluted by the existing dirty cluster;
(3) In case 3, dirty points with different properties are separated and the possibility of a dirty cluster becoming the largest is reduced.

Action 3 lets $i$ proceed to check the relation to $j-1$ if $j$ is the first point of a cluster ($f[j] = -1$) or $j$ has already been omitted ($f[j] = 0$).
(1) In cases 1 and 3, $i$ has three types of outcomes:
(a) becomes a member of a clean cluster, which, as mentioned above, has no effect on the final choice;
(b) becomes a member of a dirty cluster that corresponds to the purpose of collecting data with similar properties;
(c) creates a new cluster that is isolated from the others, as it is a dirty point.
So moving on is a smart option.
(2) In case 2, $i$ continues the search to join the clean cluster.

\subsubsection{Repair via Cluster}
\label{sect:repair-via-cluster}

\begin{algorithm}[!ht]
\caption{\textsf{MTCSC-C}($\boldsymbol{\mathit{x}}, \mathit{s}$)}
\label{alg:cluster}
  \KwIn{time series $\boldsymbol{\mathit{x}}$, speed constraint $\mathit{s}$}
  \KwOut{a repair $\boldsymbol{\mathit{x'}}$ of $\boldsymbol{\mathit{x}}$}
    $W \leftarrow \{\}$\;
    \For{$k \leftarrow 2$ to $n$ \do}{
      $W.add(\boldsymbol{\mathit{x}}_k)$\;
	  \For{$i \leftarrow k+1$ to $n$ \do}{
	    \If{$\mathit{t}_{i}>\mathit{t}_{k}+\mathit{w}$}{
	      \Break\;
        }
	    $W.add(\boldsymbol{\mathit{x}}_i)$\;
      } % end of \For i
	  $Clusters \leftarrow \textsf{BuildCluster} (\boldsymbol{\mathit{x}}_{k-1}', W, \mathit{s})$\;
	  $i \leftarrow \argmax_j Cluster(j).size() + k$\;
%	  $i \leftarrow 0$\;
%	  \For{$\textsf{key, value}$ \textsf{in} $\textsf{Map}$ \do}{
%	    $i = i\  ?\  key\  :\  i\  \geq\  value.size$\;
%	  }
%	  $i \leftarrow i+k$\;
      \If {
      \begin{revised}
      !(\textsf{satisfy}$(\boldsymbol{\mathit{x}}'_{k-1}, x_{k})$ \textbf{and} \textsf{satisfy}$(\boldsymbol{\mathit{x}}_k, \boldsymbol{\mathit{x}}_i)$)
      \end{revised}}{
	    $\alpha \leftarrow (\mathit{t}_{k}-\mathit{t}_{k-1})/(\mathit{t}_{i}-\mathit{t}_{k-1})$\;
	    \For{$\ell\leftarrow 1$ to $D$ \do}{
	      $\mathit{x}_k^{\ell'}\leftarrow\alpha\cdot(\mathit{x}_{i}^{\ell}-\mathit{x}_{k-1}^{\ell'}) + \mathit{x}_{k-1}^{\ell'}$\;
	    }
	  }
	  $W.clear()$\;
	} % end of \For k
	
	\Return $\boldsymbol{\mathit{x}}'$\;
\end{algorithm}

Algorithm \ref{alg:cluster} introduces the online cleaning method via clustering.
For each point $k$ as the key point, Lines 3 to 7 collect all points in the window starting with $k$, and Line 8 constructs the clusters.
After collecting the data distribution of the given window, Line 9 considers the first point of the largest cluster as the representative for cleaning.
\begin{revised}
Line 10 recognizes whether $\boldsymbol{\mathit{x}}_{k}$ violates the speed constraint, so that the time complexity of the detection for each point is $O(w^2D+2D)$.
\end{revised}
In Lines 
\begin{revised}
11 to 13
\end{revised}
, $\boldsymbol{\mathit{x}}'_{k}$ is determined according to the formula (\ref{equation:local-stream}).
The time complexity of Algorithm \ref{alg:cluster} is 
\begin{revised}
$O(\mathit{w}^2Dn)$.
\end{revised}
As mentioned above, Algorithm \ref{alg:cluster} is an improvement of Algorithm \ref{alg:local} as it searches for a more ``accurate'' succeeding point to identify the candidate range.
Further discussion on the difference between them can be found in Section \ref{sect:experiment}.

\begin{example}[Repair via Cluster]
\label{example:repair-via-cluster}
\begin{figure}[t]
  \centering
  \includegraphics[width=\figwidths]{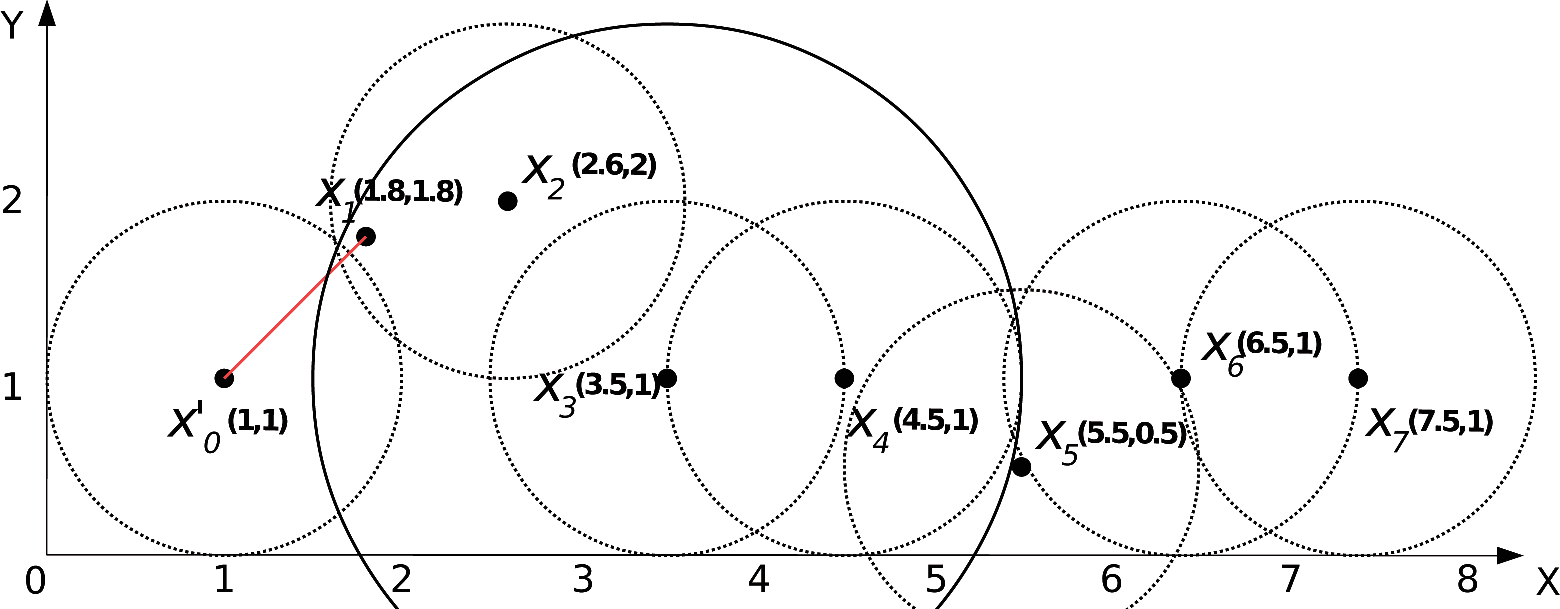}
  \caption{Build Cluster and Repair via Cluster}%
  \label{fig:cluster-example}
\end{figure}
Consider a time series ($D = 2$) $\boldsymbol{\mathit{x}} = \{(1,1), (1.8,1.8), (2.6,2), (3.5,1), (4.5,1), (5.5,0.5), (6.5,1), \allowbreak (7.5,1)\}$ of 8 points, with timestamps $\mathit{t} = \{0, 1, 2, 3, 4, 5, 6, 7\}$.
Figure \ref{fig:cluster-example} shows the data points (in black). 
We assume that $\boldsymbol{\mathit{x}}_0$ has been repaired.
Let the speed constraint $\mathit{s} = 1$ and windows size $\mathit{w} = 6$.

%Let $\mathit{t}_k=1$, i.e. $\boldsymbol{\mathit{x}}_1$ be the key point, then the current window is \{$\boldsymbol{\mathit{x}}_1, \boldsymbol{\mathit{x}}_2, \boldsymbol{\mathit{x}}_3, \boldsymbol{\mathit{x}}_4, \boldsymbol{\mathit{x}}_5, \boldsymbol{\mathit{x}}_6, \boldsymbol{\mathit{x}}_7$\} and the previous fixed point is $\boldsymbol{\mathit{x}}_0'$.
Let $\mathit{t}_k=1$, i.e. $\boldsymbol{\mathit{x}}_1$ be the key point, then the current window is \{$\boldsymbol{\mathit{x}}_1, \boldsymbol{\mathit{x}}_2, \ldots, \boldsymbol{\mathit{x}}_7$\} and the previous fixed point is $\boldsymbol{\mathit{x}}_0'$.
According to lines 3-6 of Algorithm \ref{alg:build-cluster}, $\boldsymbol{\mathit{x}}_2$ satisfies the speed constraint with $\boldsymbol{\mathit{x}}_0'$ ($\sqrt{(2.6-1)^2+(2-1)^2}/(2-0) \approx 0.94<1$).
So we create a new cluster containing $\boldsymbol{\mathit{x}}_2$.

%Next, starting from line 7 of Algorithm \ref{alg:build-cluster}, we proceed to evaluate different scenarios for the points following $\boldsymbol{\mathit{x}}_2$.
Next, starting from line 7 of Algorithm \ref{alg:build-cluster}, we proceed to evaluate different scenarios for the following points. We only show the first two points after $\boldsymbol{\mathit{x}}_2$ due to the space limit.

$\boldsymbol{\mathit{x}}_3$ : 
$\boldsymbol{\mathit{x}}_3$ is not compatible with $\boldsymbol{\mathit{x}}_2$ and there are no data points before $\boldsymbol{\mathit{x}}_2$, but it is compatible with $\boldsymbol{\mathit{x}}_0'$, thus we create a new cluster for $\boldsymbol{\mathit{x}}_3$ (Action 2 and $j=\ell$).

$\boldsymbol{\mathit{x}}_4$ : $\boldsymbol{\mathit{x}}_4$ is compatible with $\boldsymbol{\mathit{x}}_3$, we add $\boldsymbol{\mathit{x}}_4$ into the cluster containing $\boldsymbol{\mathit{x}}_3$ (Action 1).

%$\boldsymbol{\mathit{x}}_5$ : $\boldsymbol{\mathit{x}}_5$ is not compatible with $\boldsymbol{\mathit{x}}_4$ and $\boldsymbol{\mathit{x}}_4$ is not the first point in its cluster. 
%Due to it is compatible with $\boldsymbol{\mathit{x}}_0'$, we create a new cluster for $\boldsymbol{\mathit{x}}_5$ (Action 2 and $f[j]>0$).

%$\boldsymbol{\mathit{x}}_6$ : $\boldsymbol{\mathit{x}}_6$ is not compatible with $\boldsymbol{\mathit{x}}_5$ and $\boldsymbol{\mathit{x}}_5$ is the first point in its cluster, $\boldsymbol{\mathit{x}}_6$ will continue to compare with succeeding points (Action 3).
%We find that $\boldsymbol{\mathit{x}}_6$ is compatible with $\boldsymbol{\mathit{x}}_4$, so we add $\boldsymbol{\mathit{x}}_6$ into the cluster (the first point is $\boldsymbol{\mathit{x}}_3$) containing $\boldsymbol{\mathit{x}}_4$ (Action 1).

%$\boldsymbol{\mathit{x}}_7$ : $\boldsymbol{\mathit{x}}_7$ is compatible with $\boldsymbol{\mathit{x}}_6$, we add $\boldsymbol{\mathit{x}}_7$ into the cluster (the first point is $\boldsymbol{\mathit{x}}_3$) containing $\boldsymbol{\mathit{x}}_6$ (Action 1).

In the end, three clusters will be formed, namely \{$\boldsymbol{\mathit{x}}_2$\}, \{$\boldsymbol{\mathit{x}}_3$, $\boldsymbol{\mathit{x}}_4$, $\boldsymbol{\mathit{x}}_6$, $\boldsymbol{\mathit{x}}_7$\} and \{$\boldsymbol{\mathit{x}}_5$\}.
According to Algorithm \ref{alg:cluster}, the repair based on $\boldsymbol{\mathit{x}}_0'$ and $\boldsymbol{\mathit{x}}_3$ for $\boldsymbol{\mathit{x}}_1$ is:
\begin{align*} &
\mathit{x}^{1'}_1 = \frac{\mathit{t}_1-\mathit{t}_0}{\mathit{t}_3-\mathit{t}_0}  (3.5-1) + 1 \approx 1.83, & &
\mathit{x}^{2'}_1 = \frac{\mathit{t}_1-\mathit{t}_0}{\mathit{t}_3-\mathit{t}_0}  (1-1) + 1 = 1.
\end{align*}
The final repair result obtained through Algorithm \ref{alg:build-cluster} and Algorithm \ref{alg:cluster} is $\boldsymbol{\mathit{x}}_1' = (1.83,1), \boldsymbol{\mathit{x}}_2' = (2.66,1)$ and $\boldsymbol{\mathit{x}}_5' = (5.5,1)$.
\end{example}

%--------------------------------------------------
% !TEX root = ../speed.tex

\section{Adaptive Speed Constraint}
\label{sect:adaptive}

A precise speed constraint is crucial for the performance of all proposed methods.
%There are two key factors for a speed constraint $\mathit{s}$: the border of the constraint $\mathit{s}_{\max}$ and window size $\mathit{w}$ embedded with it.
\cite{DBLP:conf/sigmod/SongZWY15, DBLP:journals/tods/SongGZWY21} capture the speed constraint via 
(1) domain experts or common sense;
(2) extraction from the data by the $95\%$ confidence level \cite{jerrold1999biostatistical}.
%In addition, they propose an adaptive method to tune the window size through a binomial sampling model.
%
However, it is still not trivial to set an appropriate speed constraint, especially in the multivariate case, since
(1) we may not obtain sufficient domain knowledge, especially if the dimensions are only weakly correlated;
(2) the confidence level $95\%$ is not always large enough to set an appropriate speed constraint;
(3) the time series (stream) is often non-stationary \cite{DBLP:journals/jips/DingZLWWL23}, i.e. the properties of the data may change.
In such cases, the predefined speed constraint may take effect in that hour, but may no longer bind the data correctly in the next hour.
For example, a person may change their modes of transportation in a short time \cite{DBLP:conf/www/ZhengLWX08}, e.g. walk to the bus stop and then take the bus. 

To solve the above problems, we develop an adaptive method to dynamically set an appropriate speed constraint by monitoring the difference in data distribution between adjacent sliding windows.

\subsection{Speed Distribution}
Theoretically, we should collect a large number of data points to model different phases of an evolving time series.
However, it is not possible to obtain such training data for each dataset, as different people and spaces have their own characteristics, e.g. different walking speeds.
We therefore intend to use the observations directly to capture the speed distribution.

Another problem arises from the sparsity of time series whose data are continuous.
If we want the frequency distribution to work well, we need to collect our data in buckets to get a high enough probability for each category \cite{DBLP:journals/vldb/WangZSW24}.
Then we use the KL divergence \cite{kullback1997information} to measure the distance between two distributions \cite{DBLP:journals/pvldb/DasuL12}.
%since it is easy to compute.
%Furthermore, the asymmetry of the KL divergence is a desired property when handling time series that has temporal order.

\subsection{Adaptive Speed}
\label{sect:adaptive-speed}

\begin{figure}[t]
\centering
\begin{minipage}{\figwidths}
\hspace{-0.5em}%
\includegraphics[width=\figwidths]{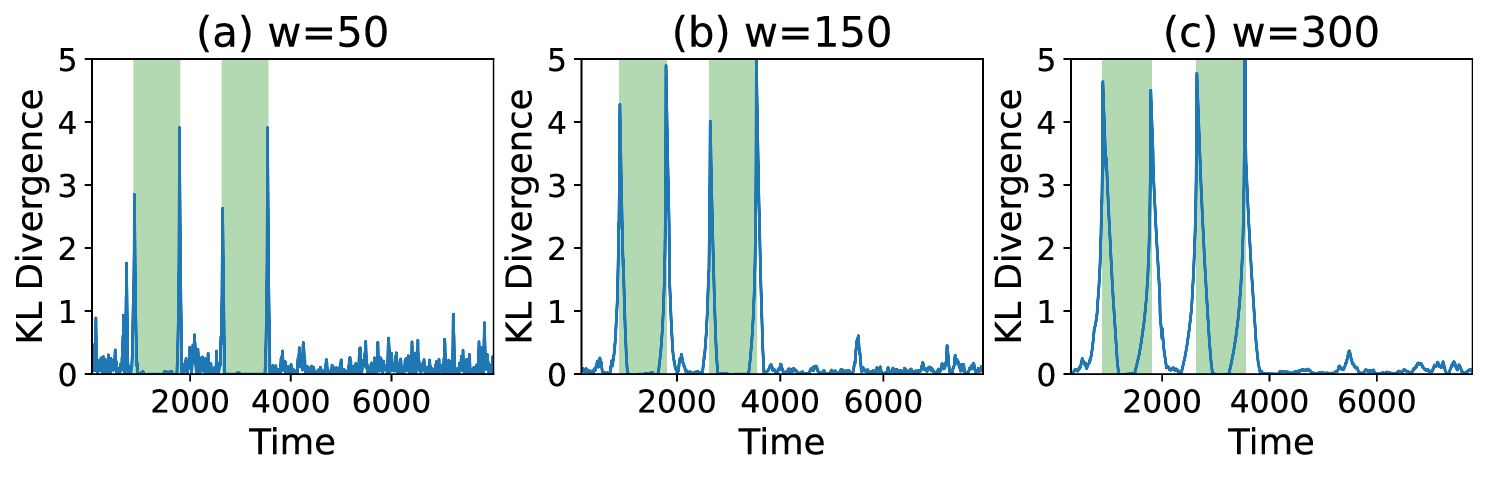}%
\end{minipage}
\begin{minipage}{\figwidths}
\hspace{-0.5em}%
\includegraphics[width=\figwidths]{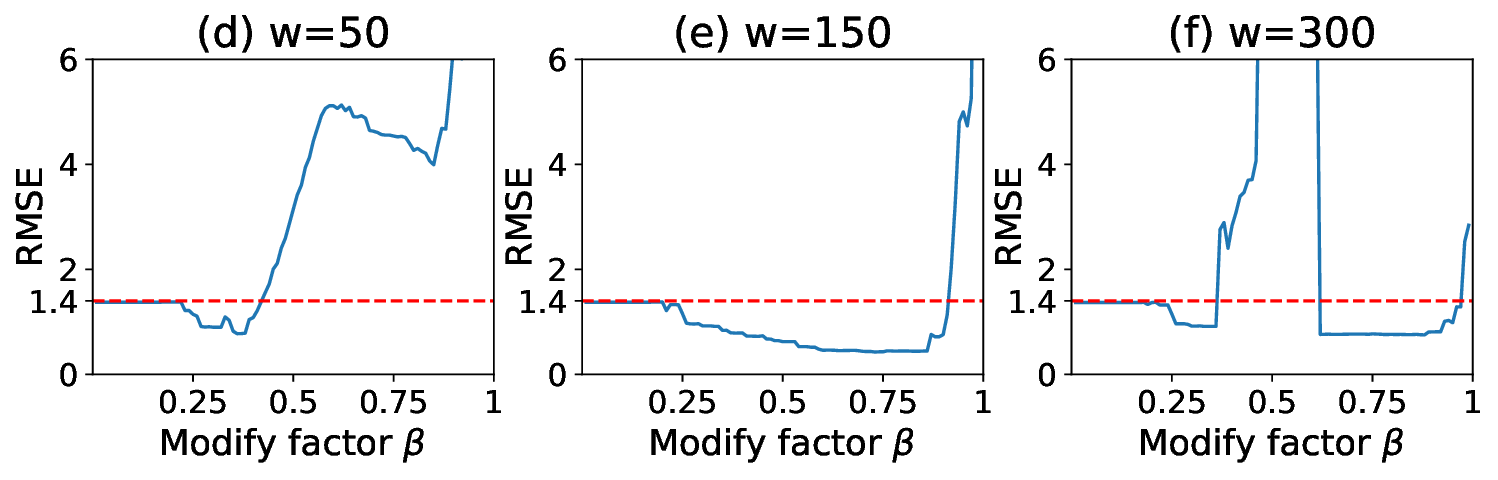}%
\end{minipage}
\caption{KL divergence/RMSE under different window sizes}
\label{fig:KL}
\end{figure}

\paragraph{Solution}
\begin{revised}
%We propose an online cleaning method \textsf{MTCSC-A} with dynamically set speed constraint.
We propose an online cleaning method \textsf{MTCSC-A} with adaptive speed.
Algorithm \ref{alg:adaptive} is the core part and introduces how to set the speed constraint dynamically with a complexity of $O(D+b+m)$.
\end{revised}
%The main procedures for \textsf{MTCSC-A} are as follows.
First, we construct the initial speed distribution in the first window $\mathit{W}_1$ and the second window $\mathit{W}_2$, where $\mathit{W}_1$ and $\mathit{W}_2$ are adjacent and disjoint.
\begin{revised}
$\mathit{W}_1$ and $\mathit{W}_2$ store the speed instead of the values of data points to save computation time.
\end{revised}
When data points arrive, we update the distribution in $\mathit{W}_1, \mathit{W}_2$ and calculate the distance between them.
Once this is greater than a threshold $\tau$, we change the speed constraint with a modify factor $\beta$ ($\mathit{s}' = \mathit{s}_{95\%}(in\ \mathit{W}_2)/\beta)$ and also update the distributions in $\mathit{W}_1, \mathit{W}_2$ (since the points in the buckets also change).
\begin{revised}
%\textsf{UpdateDistribution}$(\mathit{W}_1, \mathit{b}, \mathit{s})$ represents dividing the speed in window $\mathit{W}_1$ equally into $\mathit{b}$ buckets based on the speed constraint $\mathit{s}$.
\textsf{UpdateDistribution}$(\mathit{W}_1, \mathit{b}, \mathit{s})$ represents dividing the speed stored in $\mathit{W}_1$ into $\mathit{b}$ buckets at equal intervals based on the current speed constraint $\mathit{s}$.
\textsf{MTCSC-A} is formed by inserting Algorithm \ref{alg:adaptive} through ``$\mathit{s}$ = \textsf{AdaptiveSpeed}($\boldsymbol{\mathit{x}}_{k-1}, \boldsymbol{\mathit{x}}_k, \mathit{s}$, $\mathit{b}$, $\tau$, $\mathit{m}$, $\beta$, $\mathit{W}_1$, $\mathit{W}_2$)'' between the 3rd and 4th lines of Algorithm \ref{alg:cluster}.
Overall, the time complexity of \textsf{MTCSC-A} is $O((\mathit{D}+\mathit{b}+\mathit{m}+\mathit{w}^{2}D+2D)*n)=O(\mathit{w}^{2}Dn)$, the same with \textsf{MTCSC-C}.
\end{revised}

\begin{algorithm}[!ht]
\begin{revised}
\caption{\textsf{AdaptiveSpeed}($\boldsymbol{\mathit{x}}_p, \boldsymbol{\mathit{x}}_k, \mathit{s}$, $\mathit{b}$, $\tau$, $\mathit{m}$, $\beta$, $\mathit{W}_1$, $\mathit{W}_2$)}
\label{alg:adaptive}
  \KwIn{the last preceding point $\boldsymbol{\mathit{x}}_p$, the key point $\boldsymbol{\mathit{x}}_k$, speed constraint $\mathit{s}$, bucket number $\mathit{b}$, threshold $\tau$, monitoring interval $\mathit{m}$, modify factor $\beta$, the first window $\mathit{W}_1$, the second window $\mathit{W}_2$}
  \KwOut{speed constraint $\mathit{s}'$}
%    $\mathit{W}_1 \leftarrow queue$, $\mathit{W}_2 \leftarrow queue$\;
	$\mathit{s}' = \mathit{s}$, $\mathit{s}_1 \leftarrow d(\boldsymbol{\mathit{x}}_k, \boldsymbol{\mathit{x}}_p)/(\mathit{t}_k-\mathit{t}_p)$\;
    \If{$\mathit{W}_1.size < \mathit{m}$}{
      $\mathit{W}_1.push(\mathit{s}_1)$, \textsf{UpdateDistribution}$(\mathit{W}_1, \mathit{b}, \mathit{s})$\;
    } \ElseIf{$\mathit{W}_2.size < \mathit{m}$}{
      $\mathit{W}_2.push(\mathit{s}_1)$, \textsf{UpdateDistribution}$(\mathit{W}_2, \mathit{b}, \mathit{s})$\;
    } \ElseIf{$\mathit{W}_2.size = \mathit{m}$}{
	  \If{$KL(\mathit{W}_1, \mathit{W}_2) > \tau$}{
		$\mathit{s}' = \mathit{s}_{95\%}(in\ \mathit{W}_2)/\beta$\;
%		\textsf{UpdateDistribution}$(\mathit{W}_1, \mathit{b}, \mathit{s}')$\;
%		\textsf{UpdateDistribution}$(\mathit{W}_2, \mathit{b}, \mathit{s}')$\;
	  }	
%		\Else{
%		$\mathit{s}' = \mathit{s}$\;
%	  }
	  $\mathit{s}_2 \leftarrow \mathit{W}_2.front()$, $\mathit{W}_2.pop()$\;
	  $\mathit{W}_1.push(\mathit{s}_2)$, $\mathit{W}_1.pop()$, \textsf{UpdateDistribution}$(\mathit{W}_1, \mathit{b}, \mathit{s}')$\;
	  $\mathit{W}_2.push(\mathit{s}_1)$, \textsf{UpdateDistribution}$(\mathit{W}_2, \mathit{b}, \mathit{s}')$\;
	}
	 % end of \For k
	
	\Return $\mathit{s}'$\;
\end{revised}
\end{algorithm}

\paragraph{Settings}
We have the following four hyper-parameters.
\begin{itemize}[fullwidth]
\item
Bucket number $b$. 
Figures \ref{exp:sensitivity}(a) show that $b$ it is not sensitive.
\item
Threshold $\tau$.
As can be seen in Figures \ref{fig:KL}(a, b, c), there is a clear difference between the KL divergence at the time when the speed changes or not (green blocks indicate a different mode), regardless of the window size.
\item
\begin{revised}
%Window size $\mathit{w}$.
Monitoring interval $\mathit{m}$.
\end{revised}
Obviously there is a trade-off of 
\begin{revised}
$\mathit{m}$
\end{revised}
, as shown in Figure \ref{fig:KL}.
(1) For a small 
\begin{revised}
$\mathit{m} = 50$
\end{revised}
%$\mathit{w} = 50$
, it is difficult to determine the updated $\mathit{s}$ because the speed distribution may not be reliable due to the small number of data points.
The red line in Figure \ref{fig:KL}(d) shows the RMSE of the observations without any repair.
We find that only a small interval of $\beta$ can lead to better performance.
(2) On the other hand, 
\begin{revised}
$\mathit{m}=300$
\end{revised}
%$\mathit{w}=300$
 is large enough to collect a number of data
points that $\mathit{s}_{95\%}$ is relatively accurate.
So if $\beta$ is small, the updated constraint $\mathit{s}$ is too large to identify the error points.
At the same time, the updated constraint must not be too small, as this would mislead the cluster result.
However, such a large 
\begin{revised}
$\mathit{m}$
\end{revised}
%$\mathit{w}$ 
also reduces the range in which $\tau$ can be selected. 
We find that $2$ is sufficient for 
\begin{revised}
$\mathit{m} = 50$
\end{revised}
%$\mathit{w} = 50$
, $3$ for 
\begin{revised}
$\mathit{m} = 150$
\end{revised}
%$\mathit{w} = 150$ 
and $4$ or more for 
\begin{revised}
$\mathit{m} = 300$
\end{revised}
%$\mathit{w} = 300$
.
(3) Intuitively, we would consider a ``proper'' setting with a moderate value, for example 
\begin{revised}
$\mathit{m} = 150$
\end{revised}
%$\mathit{w} = 150$
 in the figure.
Following this guide, a good 
\begin{revised}
$\mathit{m}$
\end{revised}
%$\mathit{w}$
 could practically be chosen by observing the width of the KL divergence.
\item
Modify factor $\beta$.
It is related to the setting of 
\begin{revised}
$\mathit{m}$
\end{revised}
%$\mathit{w}$
 and has already been discussed above.
\end{itemize}

\begin{example}[Adaptive speed]
Let $\mathit{s}=2.2$ and $b=6$, the speed bucket is \{[0,0.44], (0.44,0.88], (0.88,1.32], (1.32,1.76], (1.76,2.2], (2.2, $+\infty$]\}.
The last bucket is placed at a point that exceeds the current speed constraint.
Consider the speed distribution of the first window $\mathit{W}_1=\{0, 0, 0, 90, 60, 0\}$ and second window $\mathit{W}_2=\{3, 4, 1, 44, 25,\allowbreak 73\}$ with $\tau = 0.75$, 
\begin{revised}
$\mathit{m} = 150$
\end{revised}
%$\mathit{w} = 150$
 and $\beta = 0.75$. 

For $\mathit{W}_1$ and $\mathit{W}_2$, the probability distribution 
$\mathit{P}_1  = \{0, 0, 0, 0.6, 0.4, 0\}$ and 
$\mathit{P}_2  = \{0.02, 0.027, 0.007, 0.293, 0.167, 0.487\}$.
The calculation process of KL divergence between distributions $\mathit{P}_1$ and $\mathit{P}_2$ is:
\begin{align*}
D_{KL}(\mathit{P}_1||\mathit{P}_2) &= 0 \times \log \frac{0}{0.02} + 0 \times \log \frac{0}{0.027} + 0 \times \log \frac{0}{0.007} \\
&\quad + 0.6 \times \log \frac{0.6}{0.293} + 0.4 \times \log \frac{0.4}{0.167} + 0 \times \log \frac{0}{0.487} \\
&= 0 + 0 + 0 + 0.6 \times (0.717) + 0.4 \times (0.873) + 0 \\
&= 0.7794
\end{align*}
Since the $\mathit{D}_{KL}$ is greater than the threshold $\tau$, $\mathit{s}$ is thus  modified to $\mathit{s}_{95\%}(in\ \mathit{W}_2)/\beta = 3.572 / 0.75 \approx 4.763$.
Meanwhile, the speed distribution of $\mathit{W}_1$ and $\mathit{W}_2$ will be updated using the latest speed bucket (\{[0,0.9526], (0.9526,1.9052], (1.9052,2.8578], (2.8578,3.8104], (3.8104,4.763], (4.763, $+\infty$]\}), with $\mathit{W}_1 = \{0, 131, 19, 0, 0, 0\}$ and $\mathit{W}_2 = \{7, 62, 14, 61, 4, 2\}$.
There are only two points that exceed the latest speed, indicating that our latest speed settings are relatively accurate.
\end{example}

%--------------------------------------------------
% !TEX root = ../speed.tex

\section{Experiment}
\label{sect:experiment}

In this section, we use experiments to evaluate performance
(1) on univariate time series with injected errors;
(2) on multivariate time series with different degrees of correlation and injected errors;
(3) on manually collected GPS trajectories (one of them with mode changes) with real errors;
and (4) the improvement of data mining tasks such as classification and clustering.
Finally, we summarize the highlights and discuss our limitations.
The code and data of this work are available online\footnote{\url{https://anonymous.4open.science/r/mtcsc-E4CC}}.

\subsection{Settings}
We run experiments on a Windows 10 laptop with a 2.30GHz Intel Core CPU and 32GB RAM.
Deep-learning methods are employed with GPU without comparing the efficiency.

\subsubsection{Datasets}
\label{sect:datasets}
We use seven public real-world datasets that are originally clean or clean after simple pre-processing with various sizes and dimensions.
To analyze the performance, we generate synthetic errors following the guideline in \cite{DBLP:conf/sigmod/SongZWY15}.
\begin{revised}
Errors are injected by randomly replacing the values of some data points in a given dimension.
For each replaced data point, it takes a random value between the minimum and maximum values in the dataset.
\end{revised}
Moreover, we collect two sets of GPS data via smartphones. 
One contains only walking data, the other is a mixture of walking, running and cycling.
The embedded errors are manually labeled according to the exact routes and maps.
In order to avoid the effect of randomness, we run experiments on datasets with synthetic errors 10 times with different generating seeds and report the average results.
The summary of all datasets is shown in Table \ref{table:dataset}.

\begin{table}[t]
 \caption{Summary of datasets}
 \label{table:dataset}
 \centering%\small%\hspace{-1em}%\normalsize
 \resizebox{0.8\expwidths}{!}{%
 \begin{tabular}{ccccc}
 \toprule
 & Size & \#Dim & Error & \#Series \\
 \midrule
 Stock \cite{DBLP:conf/sigmod/SongZWY15} & $12$k & $1$ & Clean & $1$ \\
 ILD \cite{ILDdata} & $43$k & $3$ & Clean after pre-process & $1$ \\
 Tao \cite{DBLP:conf/cikm/AngiulliF07a} & $568$k & $3$ & Clean after pre-process & $1$ \\
 ECG\cite{yoon2020ultrafast} & $94$k & $32$ & Clean after pre-process & $1$ \\
 \midrule
 GPS(Walk) & $11$k & $2$ & Embedded & $1$ \\
 GPS(Mixed) & $8$k & $2$ & Embedded & $1$ \\
 \midrule
 ArrowHead \cite{UCRArchive} & $251$ & $1$ & Clean & $211$ \\
 \begin{revised}
 AtrialFib \cite{DBLP:journals/corr/abs-1811-00075} \end{revised} & $640$ & $2$ & Clean & $30$ \\
% \end{revised}
% Car \cite{UCRArchive} & $577$ & $1$ & Clean & $120$ \\
 DSR \cite{UCRArchive} & $345$ & $1$ & Clean & $16$ \\
 \begin{revised}
 SWJ \cite{DBLP:journals/corr/abs-1811-00075}\end{revised} & $2500$ & $4$ & Clean & $27$ \\
% \end{revised}
% Meat \cite{UCRArchive} & $448$ & $1$ & Clean & $60$ \\
 \bottomrule
 \end{tabular}
}
\end{table}

\subsubsection{Metrics}
\begin{itemize}[fullwidth]
\item 
RMSE \cite{DBLP:conf/vldb/JefferyGF06} is employed to evaluate the repair accuracy.
\item
Repair distance $\delta(\boldsymbol{\mathit{x}}',\boldsymbol{\mathit{x}}) = \sum_{i=1}^n d(\boldsymbol{\mathit{x}}'_i, \boldsymbol{\mathit{x}}_i)/n$ presents the total distance between the time series before and after repair.
\item
Repair number $\Delta(\boldsymbol{\mathit{x}}', \boldsymbol{\mathit{x}}) = \sum_{i=1}^{n}\mathbb{I}(\boldsymbol{\mathit{x}}_i' \neq \boldsymbol{\mathit{x}}_i)/n$ shows the extent to which a method changes the input data.
\item
Time cost. Apart from loading data, error injection and result evaluation, the total time of other procedures is reported.
\end{itemize}

\subsubsection{Baselines}
\label{sect:baseline}
We compare our proposals with 9 competing baselines of different kinds in Table \ref{table:baseline}.
\begin{revised}
%To conduct a comprehensive comparison with competing baselines of different types in different scenarios, we can validate the effectiveness and efficiency of our proposal.
\textsf{SCREEN} and \textsf{SpeedAcc} use the univariate speed constraint and follow the widely used minimum change principle.
They are selected to verify the validity of the proposed minimum fix principle.
\textsf{HTD} calculates the correlation between the dimensions and performs data cleaning over multivariate time series based on speed constraint.
\textsf{RCSWS} aims to clean GPS data and can only implement on two-dimensional data.
\textsf{LsGreedy} is a statistical SOTA method that has good accuracy.
Due to the inspiration from interpolation in choosing repair solution, we also choose the classical smoothing-based \textsf{EWMA} method as a baseline.
\textsf{Holoclean} is a representative of the machine learning method, but it is proposed for relational data.
We thus convert time series to relational data, treating timestamps and the names of each dimension as attributes and the values of each dimension as cells.
We also convert the speed constraint of each dimension to the form of a denial constraint (which can be recognized by \textsf{HoloClean}) and add the parser module into the source code.
\end{revised}
We find that there are few works that use deep leaning techniques for time series cleaning. 
Therefore, we select two SOTA anomaly detection methods, \textsf{TranAD} and \textsf{CAE-M}, and consider their predicted/reconstructed values as repair candidates.
We recommend \textsf{MTCSC-C} as the general method and refer to it as \textsf{MTCSC} in the following experiments.% unless otherwise stated.

\begin{table}[t]
 \caption{Summary of compared methods}
 \label{table:baseline}
 \centering%\small%\hspace{-1em}%\normalsize
 \resizebox{0.8\expwidths}{!}{%
 \begin{tabular}{cccc}
 \toprule
 Algorithm & Dimension & Process & Type \\
 \midrule
 \textsf{MTCSC-G} & multivariate & batch & constraint \\
 \textsf{MTCSC-L} & multivariate & online & constraint \\
 \textsf{MTCSC-C} & multivariate & online & constraint + statistical \\
 \textsf{MTCSC-A} & multivariate & online & constraint + statistical \\
 \midrule
 \textsf{SCREEN} \cite{DBLP:conf/sigmod/SongZWY15} & univariate & online & constraint \\
 \textsf{SpeedAcc} \cite{DBLP:journals/tods/SongGZWY21} & univariate & online & constraint \\
 \textsf{LsGreedy} \cite{DBLP:conf/sigmod/ZhangSW16} & univariate & online & statistical \\
 \textsf{EWMA} \cite{GARDNER2006637} & univariate & online & smoothing \\
 \textsf{RCSWS}\cite{DBLP:journals/tist/FangWYX22} & multivariate & online & constraint + statistical \\
 \textsf{HTD} \cite{DBLP:journals/ahswn/ZhouYZSMY22} & multivariate & batch & constraint \\
 \textsf{HoloClean} \cite{DBLP:journals/pvldb/RekatsinasCIR17} & multivariate & batch & machine learning \\
 \textsf{TranAD}\cite{DBLP:journals/pvldb/TuliCJ22} & multivariate & online & deep learning \\
 \textsf{CAE-M}\cite{DBLP:journals/corr/abs-2107-12626} & multivariate & online & deep learning \\
 \bottomrule
 \end{tabular}
}
\end{table}

\subsection{Comparison on Univariate Data}
Although we focus on the multivariate case, we also investigate whether we can outperform the univariate specific competitors on univariate time series.

\subsubsection{Comparison among our proposals}
%Figure \ref{exp:stock} shows the performance of proposed \textsf{MTCSC-G}, \textsf{MTCSC-L}, and \textsf{MTCSC-C} on Stock.
%Figure \ref{exp:stock}(a) shows that \textsf{MTCSC-G} and \textsf{MTCSC-C} are excellent, while \textsf{MTCSC-L} is slightly inferior.
%Figure \ref{exp:stock}(b) shows that \textsf{MTCSC-G} takes the most time, while \textsf{MTCSC-L} takes the least time.
%\textsf{MTCSC-C} achieves better results than \textsf{MTCSC-G} while taking less than half the time.
%Figure \ref{exp:stock}(d) shows that \textsf{MTCSC-G} modifies the fewest points, while both \textsf{MTCSC-L} and \textsf{MTCSC-C} modify more points compared to \textsf{MTCSC-G}.
%This corresponds to what was mentioned in Section \ref{sect:stream} that
%$\Delta(\boldsymbol{\mathit{x}}, \mathit{x}_\text{global}) \leq \Delta(\boldsymbol{\mathit{x}}, \mathit{x}_\text{local})$ and
%$\Delta(\boldsymbol{\mathit{x}}, \mathit{x}_\text{global}) \leq \Delta(\boldsymbol{\mathit{x}}, \mathit{x}_\text{clu})$.

\begin{figure}[t]
  \centering
  \includegraphics[width=\expwidths]{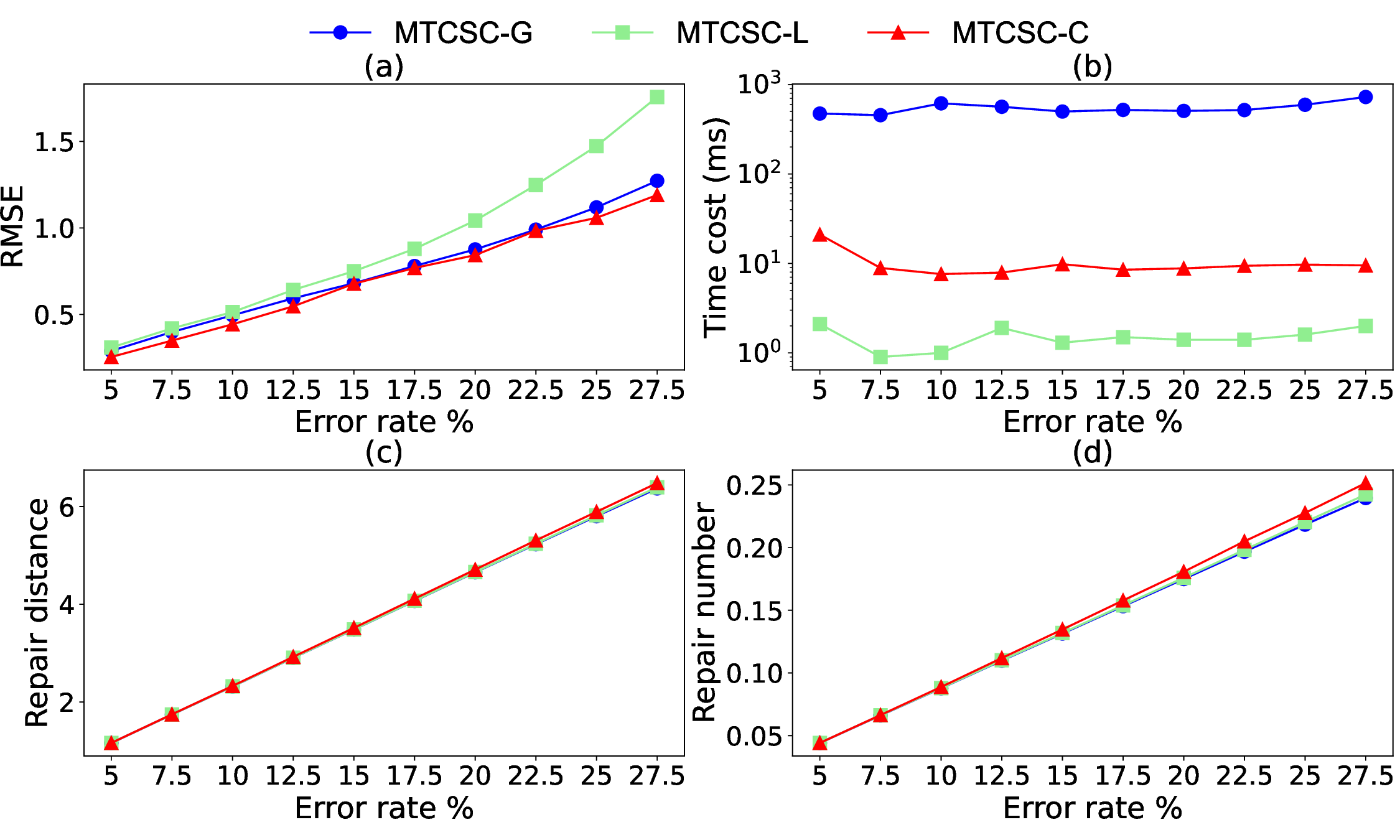}
  \caption{Varying error rate on Stock}%
  \label{exp:stock}
\end{figure}

We compare our proposals on Stock and the results are shown in Figure \ref{exp:stock}, which verifies our theory in Section \ref{sect:stream}.
Figure \ref{exp:stock}(a) shows that \textsf{MTCSC-G} and \textsf{MTCSC-C} behave similarly and are better than \textsf{MTCSC-L}.
Figure \ref{exp:stock}(b) shows that \textsf{MTCSC-G} takes the most time, while \textsf{MTCSC-L} takes the least time.
\textsf{MTCSC-C} achieves better results than \textsf{MTCSC-G} and takes less than half the time.
Figure \ref{exp:stock}(d) shows that \textsf{MTCSC-G} modifies the fewest points, while both \textsf{MTCSC-L} and \textsf{MTCSC-C} modify slightly more points compared to \textsf{MTCSC-G}.

\subsubsection{Varying Error Rate $e\%$}
\label{sect:uni-drate}

\begin{figure}[t]
  \centering
  \includegraphics[width=\expwidths]{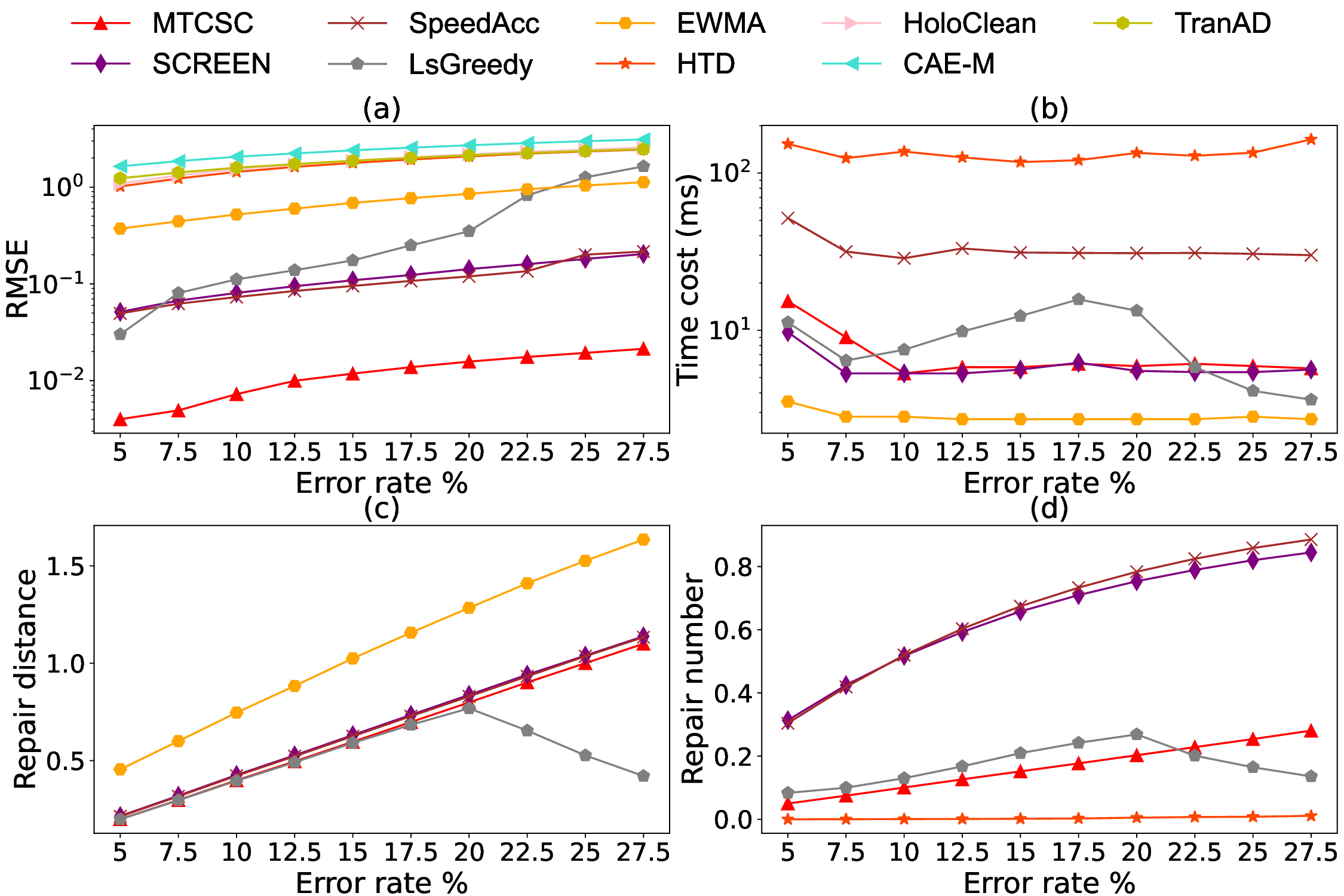}
  \caption{Varying error rate on Temperature (ILD)}%
  \label{exp:temp-drate}
\end{figure}

%Figure \ref{exp:temp-drate}(a) shows that our proposed \textsf{MTCSC} achieves the smallest RMSE, and as the error rate increases, the RMSE of \textsf{MTCSC} remains nearly unchanged, maintaining a very low level.
%When the error rate reaches 20\%, the RMSE of \textsf{LsGreedy} suddenly jumps and exhibits a high growth trend.
%Besides, in Figure \ref{exp:temp-drate}(d), as the error rate exceeds 20\%, \textsf{LsGreedy} repairs fewer points.
%This is because \textsf{LsGreedy} is a statistical-based method that, at high error rates, treats incorrect points as correct ones.
%Figure \ref{exp:temp-drate}(a)(b) shows that while achieving optimal repair effectiveness, \textsf{MTCSC} exhibits similar time consumption to other methods.
%As the error rate exceeds 20\%, \textsf{LsGreedy} gradually reduces its computation time because it has fewer points to repair, with fewer points to process, the time required decreases.
%Similarly, in Figure \ref{exp:temp-drate}(c), the reduction in repair distance of \textsf{LsGreedy} is also due to the reasons mentioned earlier.
%Figure \ref{exp:temp-drate}(c)(d) show that \textsf{MTCSC} has repaired fewer points, accurately repaired dirty points, and is also lower in the repair distance than \textsf{SCREEN} and \textsf{SpeedAcc} based on minimum change principle.

Figure \ref{exp:temp-drate}(a) shows that the proposed \textsf{MTCSC} achieves the best accuracy and is robust to error rates.
When the error rate reaches 20\%, the RMSE of \textsf{LsGreedy} suddenly jumps and shows a strong growth trend.
The direct reason for this can be found in Figure \ref{exp:temp-drate}(d), that \textsf{LsGreedy} repairs fewer points then.
The deeper insight is that the statistical based method \textsf{LsGreedy} treats the wrong points as correct ones when the error rate is high.
Figure \ref{exp:temp-drate}(b) shows that \textsf{MTCSC} has good efficiency.
%Figures \ref{exp:temp-drate}(c) and (d) show that \textsf{MTCSC} can also have a smaller repair distance than \textsf{SCREEN} and \textsf{SpeedAcc}, which are based on minimum change principle by repairing much fewer points.
%Considering the accuracy shown in Figure \ref{exp:temp-drate}(a), the use of \emph{minimum fix principle} makes sense.
\begin{revised}
As shown in Figures \ref{exp:temp-drate}(c) and (d),
we find that \textsf{SCREEN}, \textsf{SpeedAcc}, \textsf{LsGreedy}, and \textsf{HTD},
which are based on the minimum change principle, modify many more points than our proposed \textsf{MTCSC}, which is based on the minimum fix principle, resulting in an even larger repair distance.
Moreover, the number of points modified by \textsf{MTCSC} is relatively similar to the number of manually injected errors.
\end{revised}
Considering the accuracy shown in Figure \ref{exp:temp-drate}(a), the use of \emph{minimum fix principle} makes sense.

\subsubsection{Varying Data Size $n$}

\begin{figure}[t]
  \centering
  \includegraphics[width=\expwidths]{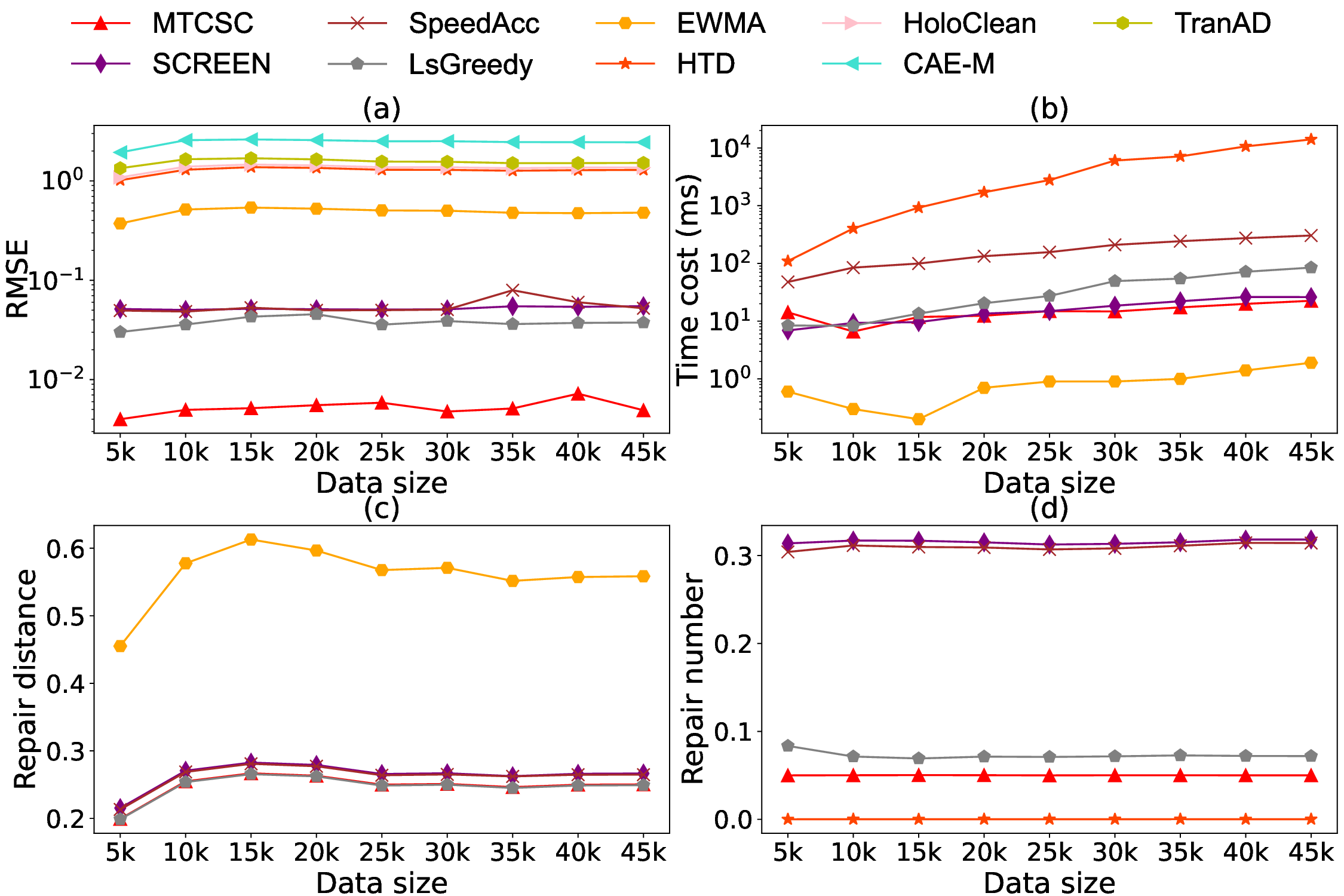}
  \caption{Varying data size on Temperature (ILD)}%
  \label{exp:temp-size}
\end{figure}

%Figure \ref{exp:temp-size}(a)(b) shows that \textsf{MTCSC} also exhibits optimal repair effectiveness in terms of data volume, while its time consumption is only significantly higher than that of \textsf{EWMA}. 
%This confirms that \textsf{MTCSC} has good scalability with large datasets.
%Additionally, in all data sizes, we artificially set the error rate to 5\%. Figure \ref{exp:temp-size}(d) shows that \textsf{MTCSC} repairs approximately 5\% of the points in each data size, indicating that \textsf{MTCSC} mainly cleans dirty points.
%%
%It is evident that both machine learning-based \textsf{HoloClean} and deep learning-based \textsf{TranAD} and \textsf{CAE-M} exhibit relatively poor performance on univariate data.

Figures \ref{exp:temp-size}(a) and (b) show that \textsf{MTCSC} also has good scalability in terms of data volume, achieves high accuracy and has a similar time cost as most fast methods with the exception of the \textsf{EWMA}.
Figure \ref{exp:temp-size}(d) shows that \textsf{MTCSC} repairs about 5\% of the points in each data size, which corresponds to the number of injected errors and indicates that \textsf{MTCSC} mainly cleans dirty points.
It can be seen that both the machine learning-based \textsf{HoloClean} and the deep learning-based \textsf{TranAD} and \textsf{CAE-M} perform relatively poorly on univariate data.

\subsection{Comparison on Multivariate Data}
\label{sect:multivariate}

We perform experiments on multivariate time series with different error rates, data sizes and error patterns.
Like other univariate methods, we also evaluate \textsf{MTCSC-Uni} by applying \textsf{MTCSC} in each dimension separately to analyze whether considering the entire dimensions as a whole has an effect.

\subsubsection{Varying Error Rate $e\%$}

\begin{figure}[t]
  \centering
  \includegraphics[width=\expwidths]{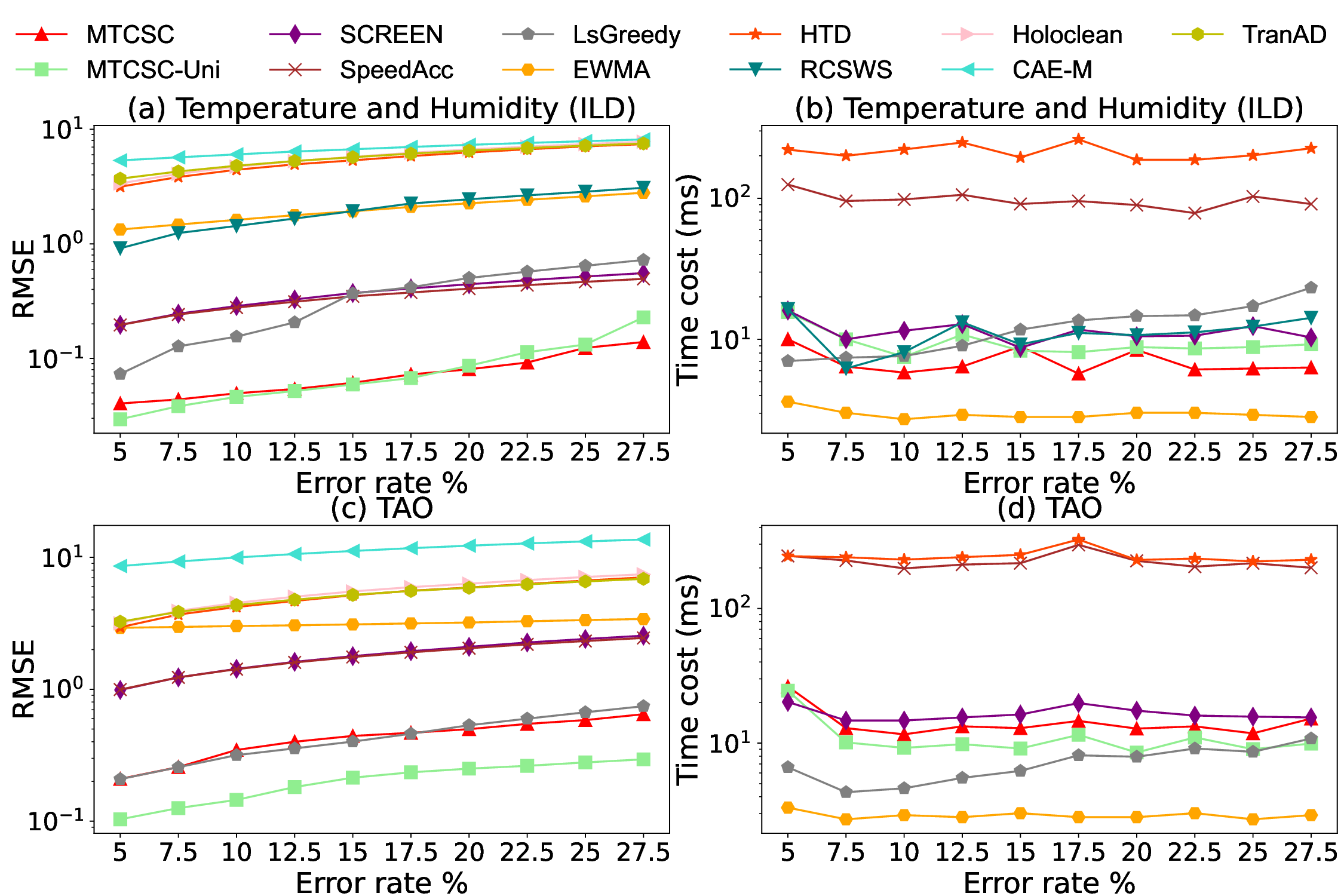}
  \caption{Varying error rate on (a-b) ILD (c-d) TAO (separate)}%
  \label{exp:drate-separate}
\end{figure}

\begin{figure}[t]
  \centering
  \includegraphics[width=\expwidths]{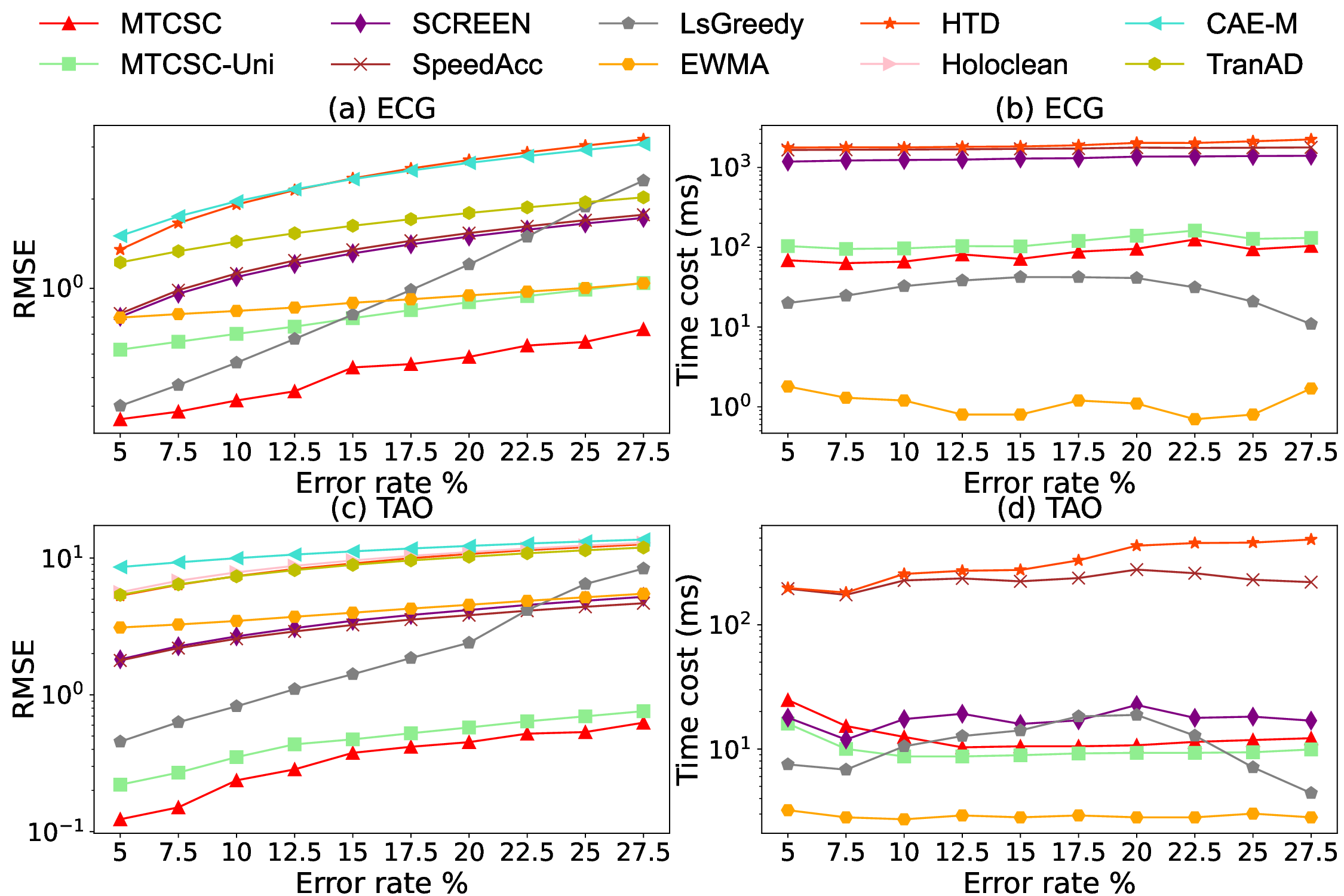}
  \caption{Varying error rate on (a-b) ECG (c-d) TAO (together)}%
  \label{exp:drate-together}
\end{figure}

%Figure \ref{exp:drate-separate}(a) shows that our \textsf{MTCSC} and \textsf{MTCSC-Uni} achieve higher accuracy on ILD.
%Figures \ref{exp:drate-together}(a) and (b) show that for high-dimensional (32) ECG data, \textsf{MTCSC} exhibits optimal performance while also showing a significant improvement in time consumption compared to SCREEN and SpeedAcc.
%In Figure \ref{exp:drate-separate}(c), under separate pattern on TAO, the performance of \textsf{LsGreedy} is comparable to \textsf{MTCSC}, but \textsf{MTCSC-Uni} outperforms \textsf{LsGreedy} by a significant margin..
%In Figure \ref{exp:drate-together}(c), \textsf{MTCSC} and \textsf{MTCSC-Uni} lead in performance on TAO, with \textsf{MTCSC} slightly outperforming \textsf{MTCSC-Uni}.
%Figure \ref{exp:drate-separate}(b)(d) and Figure \ref{exp:drate-together}(b)(d) show that \textsf{MTCSC} and \textsf{MTCSC-Uni} have not sacrificed efficiency while bringing performance improvements.

Figure \ref{exp:drate-separate}(a) shows that our \textsf{MTCSC} and \textsf{MTCSC-Uni} achieve higher accuracy on ILD.
Figures \ref{exp:drate-together}(a) and (b) show that \textsf{MTCSC} also has better accuracy with significantly less time compared to \textsf{SCREEN} and \textsf{SpeedAcc} on high-dimensional (32) ECG data.
In Figure \ref{exp:drate-separate}(c), the performance of \textsf{LsGreedy} on separate patterns on TAO is comparable to that of \textsf{MTCSC}, but \textsf{MTCSC-Uni} outperforms \textsf{LsGreedy} by far.
In Figure \ref{exp:drate-together}(c), \textsf{MTCSC} and \textsf{MTCSC-Uni} lead in performance on TAO, with \textsf{MTCSC} slightly ahead of \textsf{MTCSC-Uni}.
Figures \ref{exp:drate-separate}(b) and (d), Figures\ref{exp:drate-together}(b) and (d) show that \textsf{MTCSC} and \textsf{MTCSC-Uni} do not lose efficiency when performance is increased.

\subsubsection{Varying Data Size $n$}

\begin{figure}[t]
  \centering
  \includegraphics[width=\expwidths]{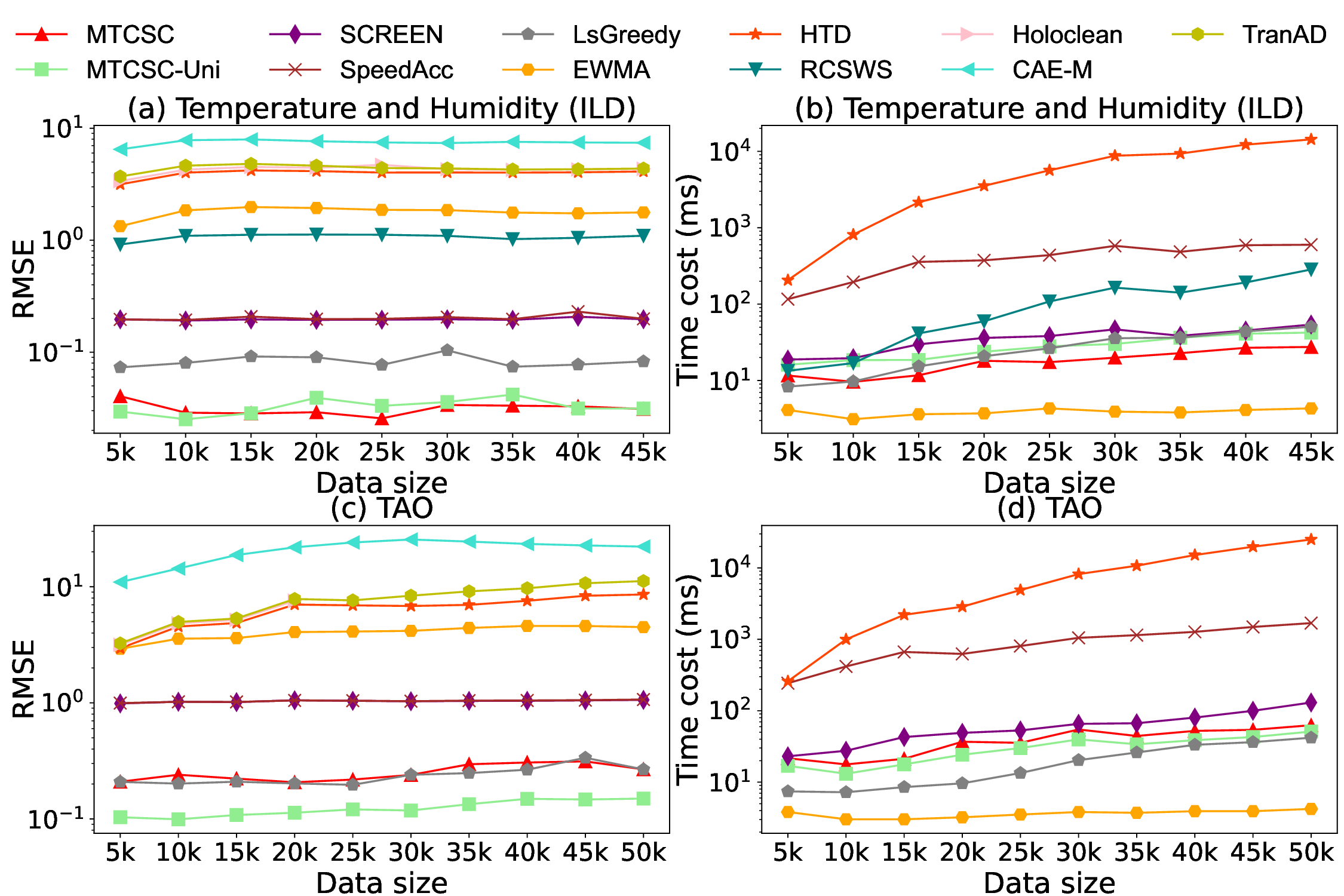}
  \caption{Varying data size on (a-b) ILD (c-d) TAO (separate)}%
  \label{exp:size-separate}
\end{figure}

\begin{figure}[t]
  \centering
  \includegraphics[width=\expwidths]{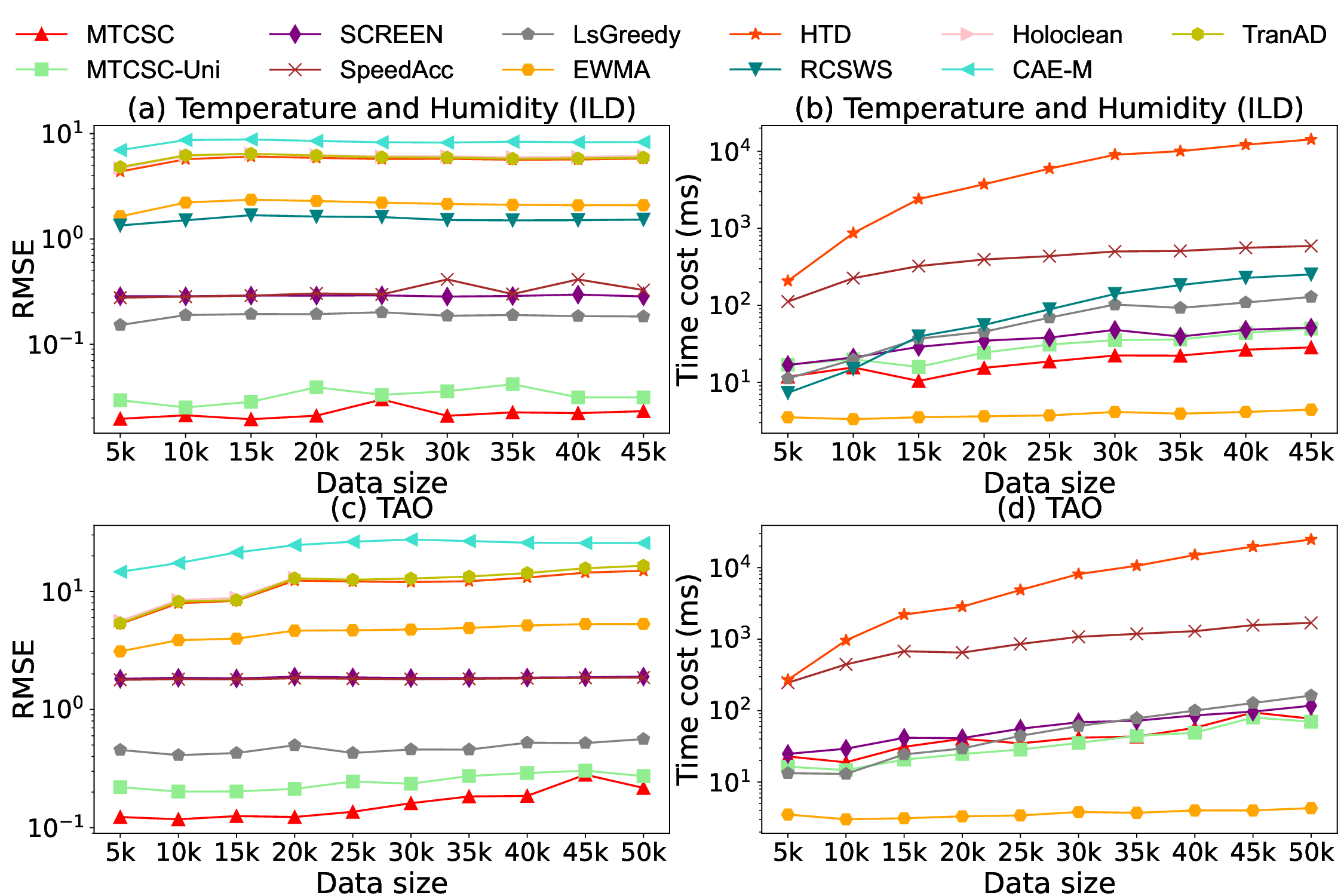}
  \caption{Varying data size on (a-b) ILD (c-d) TAO (together)}%
  \label{exp:size-together}
\end{figure}

Similarly, Figure \ref{exp:size-separate}(a) and Figure \ref{exp:size-together}(a) show that \textsf{MTCSC} and \textsf{MTCSC-Uni} outperform others on ILD over different data sizes.
In Figure \ref{exp:size-separate}(c), \textsf{LsGreedy} shows comparable performance to \textsf{MTCSC}, but is still worse than \textsf{MTCSC-Uni}.
In Figure \ref{exp:size-together}(c), the two proposed methods show slightly better performance than \textsf{LsGreedy}.
Figures \ref{exp:size-separate} and \ref{exp:size-together} confirm that \textsf{MTCSC} and \textsf{MTCSC-Uni} scale well with large datasets.

\subsubsection{Varying Error Pattern}
\label{sect:error-pattern}
%In multivariate data, we have two error patterns: separate and together.
%For example, in the case of two-dimensional data with a random injection of 5\% errors, in the separate pattern, each dimension will be injected with 2.5\% errors separately, and the error positions will be randomized.
%In the together pattern, for randomly selected error injection positions, both dimensions will be affected simultaneously.
%This is because in reality, a sensor can simultaneously collect different data (i.e. ILD), so errors occurring due to physical or transmission failures should happen simultaneously at a given moment.
%
We inject two error patterns in multivariate data: ``separate'' and ``together''.
(1) 
\begin{revised}
``Separate'' means that we inject errors into each of the dimensions independently.
For example, 2.5\% of the points will have errors in one dimension, while another 2.5\% will be affected in another dimension, when we inject 5\% errors into two-dimensional data.
\end{revised}
(2) 
\begin{revised}
``Together'' means that all dimensions of the randomly selected data points are affected simultaneously.
\end{revised}
This is to simulate the real case that a sensor can collect different types of data (e.g., temperature and humidity in ILD) at the same time, so that errors that occur due to physical or transmission failures should occur simultaneously at a given time.

%Under together pattern, regardless of the error rate and data size, our \textsf{MTCSC} outperforms all other baselines and slightly outperforms \textsf{MTCSC-Uni}.
%Under separate pattern, \textsf{MTCSC} performs slightly worse than LsGreedy on TAO, while \textsf{MTCSC-Uni} showcases a slight advantage over LsGreedy.
%In contrast, both MTCSC and MTCSC-Uni consistently outperform other baselines on ILD under separate pattern.
%This is because as the dimensions increase, the separate pattern is not suitable for \textsf{MTCSC}, which considers the speed of each dimension simultaneously.
%%
%Similar to univariate data, \textsf{HoloClean}, \textsf{TranAD}, and \textsf{CAE-M} exhibit comparatively inferior performance on multivariate data.
Our \textsf{MTCSC} outperforms others (including \textsf{MTCSC-Uni}) under ``together'' pattern, as shown in Figures \ref{exp:drate-together} and \ref{exp:size-together}.
Under ``Separate'' pattern,
\textsf{MTCSC} performs similarly with the most competitive baseline \textsf{LsGreedy}, while \textsf{MTCSC-Uni} is a bit better.
In contrast, both \textsf{MTCSC} and \textsf{MTCSC-Uni} are better on ILD.
This is because as the number of dimensions increases, the error that occurs in a single dimension has little impact on the entire data, making it difficult for \textsf{MTCSC} to detect and repair.
%
%Similar to univariate case, \textsf{HoloClean}, \textsf{TranAD}, and \textsf{CAE-M} exhibit comparatively inferior performance on multivariate data

\begin{figure}
\centering
\begin{minipage}{\expwidths}
\hspace{-0.5em}%
\includegraphics[width=\expwidths]{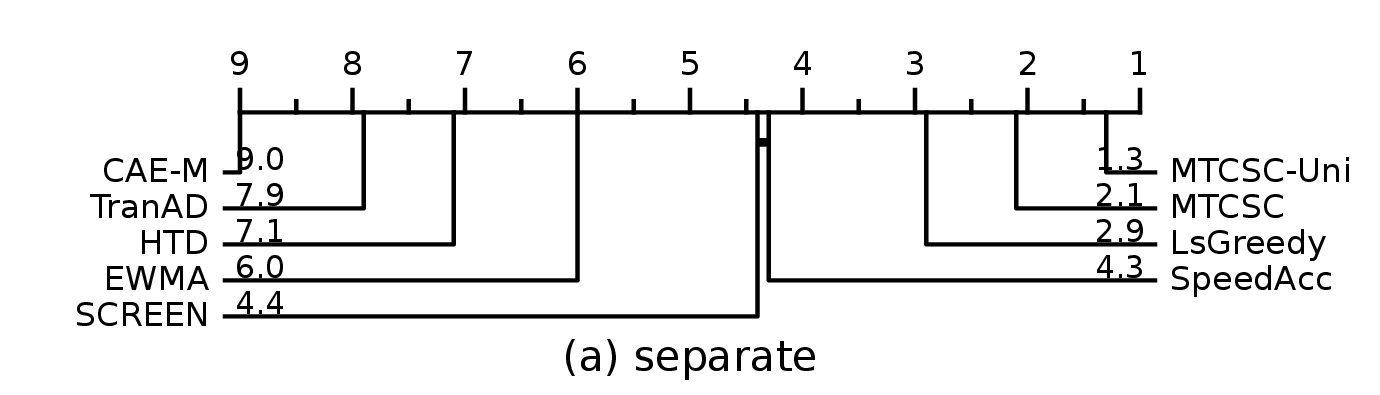}%
\end{minipage}
\hfill
\begin{minipage}{\expwidths}
\hspace{-0.5em}%
\includegraphics[width=\expwidths]{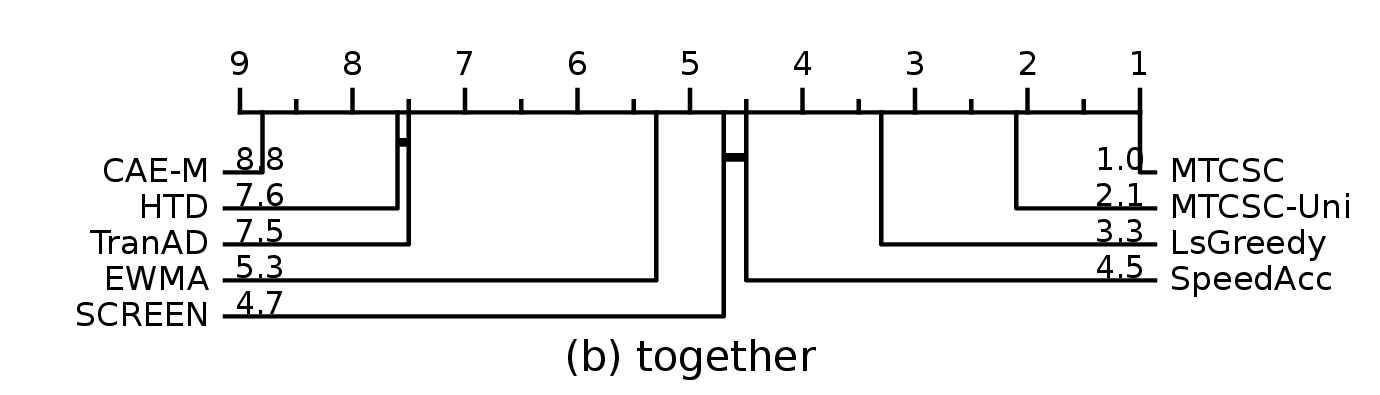}%
\end{minipage}
\caption{\begin{revised}Critical difference diagram on Error Pattern\end{revised}}
\label{exp:critical}
\end{figure}

\begin{revised}
Figure \ref{exp:critical} shows the critical difference diagram \cite{DBLP:journals/jmlr/Demsar06} for the error pattern.
Methods that are not connected by a bold line differ significantly in their average ranks.
In real cases where errors occur individually, it is therefore recommended to use \textsf{MTCSC-Uni}.
If errors can occur between multiple dimensions, as with GPS data, \textsf{MTCSC} is the best choice.
\end{revised}

\begin{figure}[t]
  \centering
  \includegraphics[width=\expwidths]{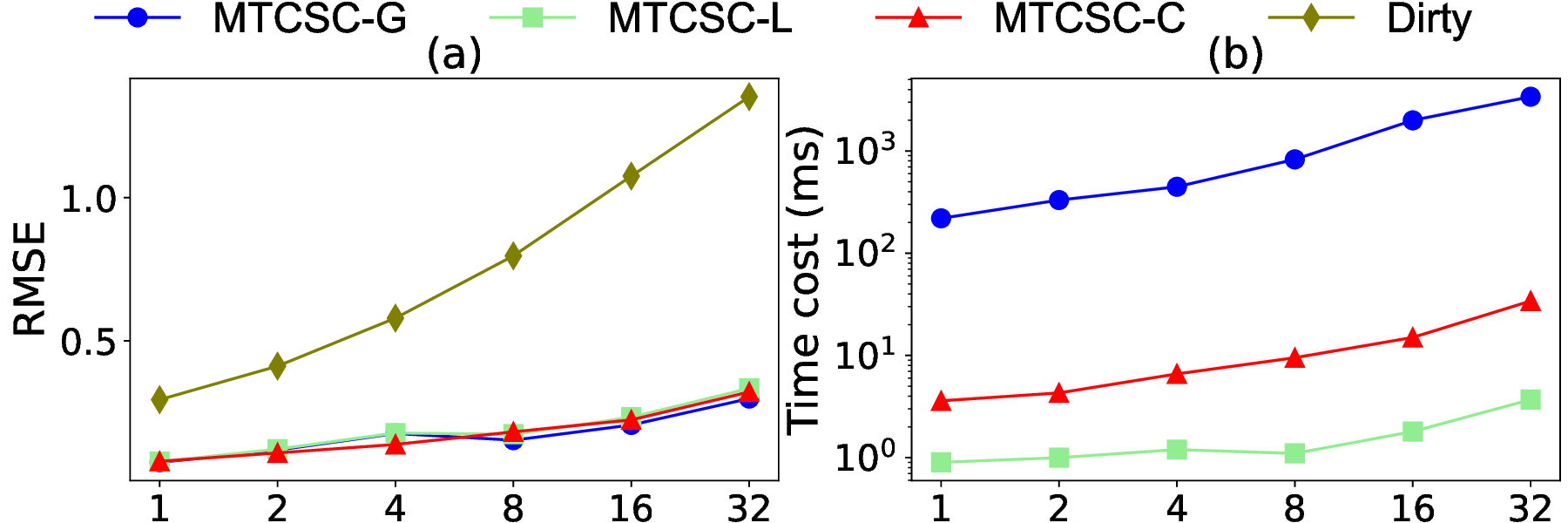}
  \caption{\begin{revised}Varying dimension on ECG\end{revised}}%
  \label{exp:dimension}
\end{figure}

\begin{revised}
\subsubsection{Varying Dimension $D$}
\label{sect:exp-dimension}
%To investigate how our proposals scale with the number of dimensions and provide clearer insights, we conduct experiments with different dimensions on ECG.
%As shown in Figures \ref{exp:dimension}, \textsf{MTCSC-G}, \textsf{MTCSC-L}, and \textsf{MTCSC-C} both have good repair effects.
%\textsf{MTCSC-C} achieves a better trade-off between efficiency and effectiveness.
%From the previous analysis, it can be seen that the time complexity of \textsf{MTCSC-G}, \textsf{MTCSC-L}, and \textsf{MTCSC-C} is a linear function of the number of dimensions.
We conduct experiments with different dimensions on ECG (32 dimensions in total).
Figure \ref{exp:dimension}(b) verifies the theoretical analysis that the time complexity of \textsf{MTCSC-G}, \textsf{MTCSC-L} and \textsf{MTCSC-C} are linear functions of the number of dimensions and shows that they can be executed in a reasonable time.
Although it is difficult to determine the appropriate speed constraint as the dimension increases, our method can still achieve good results, as can be seen in Figure \ref{exp:dimension}(a).
\end{revised}

\subsection{Comparison on Real Errors}
\label{sect:gps}
We first compare the accuracy of error repair, and then validate the effectiveness of the adaptive method and explore its sensitivity.

\subsubsection{GPS Trajectory with Human Walking}
\label{sect:walking}

\begin{table}[t]
 \caption{GPS data with manually labeled ground truth}
 \label{table:gps}
 \centering%\small%\hspace{-1em}%\normalsize
 \resizebox{0.8\expwidths}{!}{%
 \begin{tabular}{ccccc}
 \toprule
 & RMSE & repair distance & repair number \\
 \midrule
 \textsf{Dirty} & $1.3553$& - & - \\
 \textsf{MTCSC-G} & $0.4115$ & $0.1134$ & $163(1.52\%)$ \\
 \textsf{MTCSC-L} & $2.1569$ & $0.2241$ & $286(2.66\%)$ \\
 \textsf{MTCSC-C} & $\boldsymbol{0.3386}$ & $0.1265$ & $184(1.71\%)$ \\
 \textsf{MTCSC-Uni} & $0.4098$ & $0.1185$ & $160(1.49\%)$ \\
 \textsf{RCSWS} & $1.2096$ & $0.0571$ & $179(1.66\%)$ \\
 \textsf{SCREEN}& $0.9082$ & $0.0925$ & $284(2.64\%)$ \\
 \textsf{SpeedAcc} & $0.9065$ & $0.0928$ & $286(2.66\%)$ \\
 \textsf{LsGreedy} & $0.917$ & $0.061$ & $255(2.37\%)$ \\
 \textsf{EWMA} & $2.0859$ & $1.4236$ & $10753(99.99\%)$ \\
 \textsf{HTD} & $0.954$ & $0.0224$ & $41(0.38\%)$ \\
 \textsf{Holoclean} & $1.1733$ & - & - \\
 \textsf{CAE-M} & $159.5$ & - & - \\
 \textsf{TranAD} & $35.98$& - & - \\
 \bottomrule
 \end{tabular}
}
\end{table}

Table \ref{table:gps} shows the results on GPS dataset and the proposed \textsf{MTCSC-C} performs the best.
It is also worth noting that \textsf{MTCSC-L} is significantly worse than the others.
This is due to the characteristics of the collected walking data.
Walking through buildings or under bridges leads to consecutive errors, and these consecutive errors become the majority in our collection (the longest error sequence here contains 17 data points).
\textsf{MTCSC-L} only considers the previous repair and the next incoming point and is probably misled by the occurrence of consecutive errors.
In this scenario, we indeed observe the incapacity of almost all competitors.
\textsf{RCSWS} suffers from oversimplified considerations regarding the data, while \textsf{SCREEN} and \textsf{SpeedAcc} are hampered by the separate consideration of speed in the dimensions.
\textsf{LsGreedy} may not work optimally if only displacements of trajectory points occur without significant speed changes.
\textsf{HTD} cannot recognize most errors and remains unchanged.
It is possible that the learning-based methods cannot learn the pattern due to a lack of sufficient training data and achieve poor results.

\subsubsection{Different Correlations}
\label{sect:exp-correlation}
The superior performance of \textsf{MTCSC-C} compared to \textsf{MTCSC-Uni} in Table \ref{table:gps} shows that for data with correlations over Euclidean distance, the speed constraint over all dimensions together is the correct solution.
Interestingly, the results on ILD, TAO and ECG (Figures \ref{exp:drate-separate} to \ref{exp:size-together}) show that our \textsf{MTCSC} can still achieve good results as long as the speed is similarly scaled in different dimensions, even if there are no correlations between the dimensions over the Euclidean distance.
%In Table 4, the superior performance of \textsf{MTCSC-C} compared to \textsf{MTCSC-Uni} demonstrates that for data with dimensions that exhibit Euclidean distance correlations, jointly considering multivariate speed is the correct approach.
%
%Interestingly, in Section \ref{sect:multivariate}, we conducted various experiments on multiple multivariate datasets.
%The results of the experiments indicate that even if there is no Euclidean distance correlation between dimensions, as long as the speeds between dimensions are scaled to be similar, our algorithm \textsf{MTCSC-C} can still achieve good results.

\subsubsection{Adaptive Speed with Different Transportation}
\label{sect:transportation}

\begin{figure}[t]
  \centering
  \includegraphics[width=\expwidths]{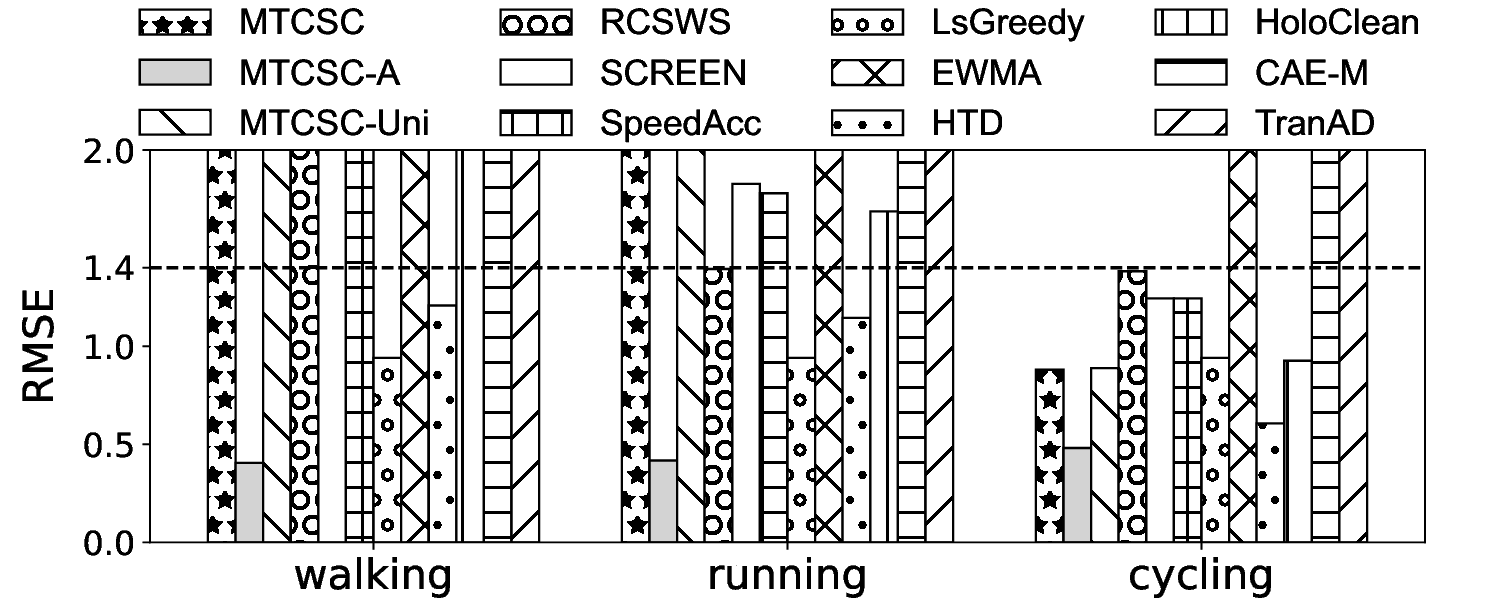}
  \caption{Adaptive Speed with Different Transportation}%
  \label{exp:adaptive-speed}
\end{figure}

\begin{figure}[t]
  \centering
  \includegraphics[width=\expwidths]{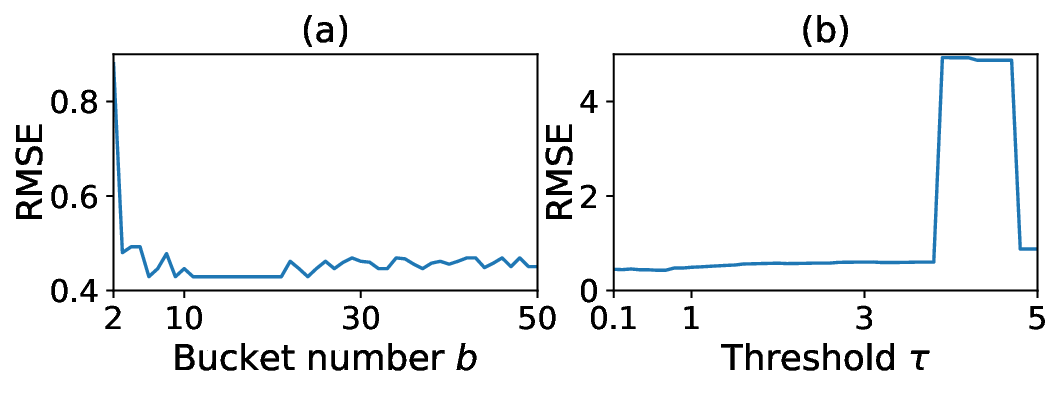}
  \caption{Sensitivity on bucket number $b$ and threshold $\tau$}%
  \label{exp:sensitivity}
\end{figure}

We conduct experiments on another collected GPS dataset with three transportation modes: walking, running and cycling.
We set the (initial) speed constraint corresponding to walking (1.6m/s), running (3.33m/s) and cycling (5.00m/s), respectively.
The hyper-parameters of \textsf{MTCSC-A} are $b = 6$, $\tau = 0.75$, $\mathit{w} = 150$ and $\beta = 0.75$.

The dashed line in Figure \ref{exp:adaptive-speed} represents the RMSE of the dirty data.
It is obvious that our \textsf{MTCSC-A} works optimally regardless of the initial setting.
When the initial speed is set to walking or running, similar to \textsf{EWMA}, the constraint-based methods behave poorly because they fix too many correct points (the cycling part).
\textsf{Lsgreedy} remains unaffected as it has nothing to do with the speed setting.
\textsf{HTD} performs relatively well with extra labels for the sake of execution (which is indeed unfair).

We perform experiments over bucket number $b$ and threshold $\tau$ (the other two are shown in Figure \ref{fig:KL}) to test the sensitivity.
Figure \ref{exp:sensitivity}(a) shows that \textsf{MTCSC-A} has a strong robustness to the bucket number $b$.
For the threshold $\tau$, the peak value observed in Figure \ref{exp:sensitivity}(b) means that a relatively large threshold delays the detection time of speed changes and thus causes an excessive change in correct points.
The subsequent sharp drop indicates that no more speed changes are detected if the threshold is even larger.
Therefore, the setting of the threshold should also be careful.

\subsection{Applications}
\label{sect:application}

\begin{figure}[t]
  \centering
  \includegraphics[width=\expwidths]{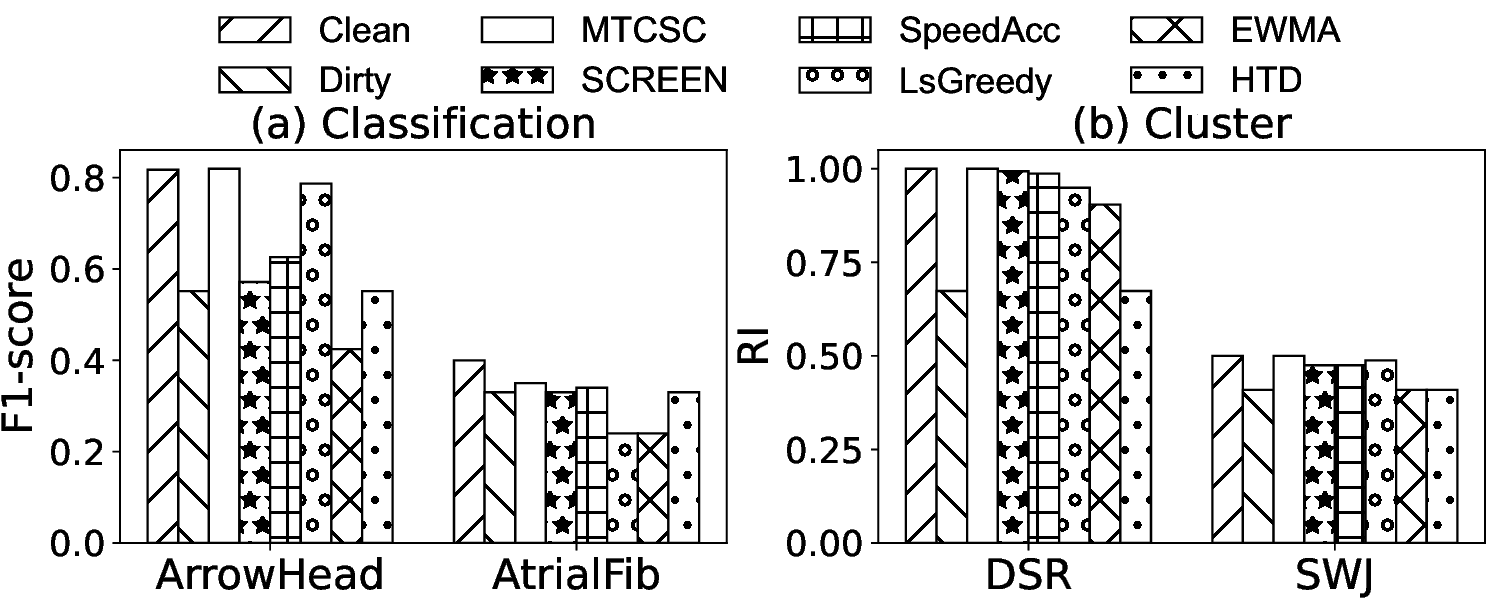}
  \caption{Classification and Cluster over clean, dirty and repaired data}%
  \label{exp:classification-cluster}
\end{figure}

We perform classification and clustering task to investigate the impact of different data cleaning methods on data mining applications.
Specifically, we evaluate the performance of the original data as ``Clean'', introduce 10\% random errors as ``Dirty'', and repair the data using different methods.
All four datasets are naturally divided into training and testing sets, where we introduce random errors only in the training set and leave the testing set unchanged.
\begin{revised}
For the multivariate time series AtrialFib and SWJ we inject errors with ``together'' pattern.
%, and clean the synthesized erroneous data using the multi version of \textsf{MTCSC}.
%Say one or two sentences about the multivariate time series classification and clustering.
\end{revised}
The clustering task only uses the training sets.
For classification, we use \textsf{KNN} \cite{DBLP:journals/tit/CoverH67} classifier, select the best K via a grid search and take the F1 score \cite{aggarwal2016outlier} as the evaluation measure.
For clustering, we use \textsf{K-means} \cite{macqueen1967some} and take the RI \cite{rand1971objective} as the evaluation measure.

As expected, Figure \ref{exp:classification-cluster} shows that our \textsf{MTCSC} performs better than others and is similar to the results of ``Clean''.
After cleaning, the distance between the time series is calculated more correctly.

\subsection{Highlights and Limitations}
\label{sect:highlight}

We summarize the experimental highlights as follows:
1) Our proposal shows superior performance in cleaning time series compared to existing methods, with accuracy remaining stable even at high error rates (above 20\%) and good scalability for large datasets;
% overall results
2) The \emph{minimum fix principle} can lead to smaller repair distances and increase repair accuracy compared to the minimum change principle as it suggests fewer changes to the originally correct points.
In addition, tracking the trends of succeeding points can increase accuracy compared to border repairs;
% compared with SCREEN
3) Under the scenario that the sensor suffers some accidents, that errors occur in each dimension simultaneously (together), the proposed \textsf{MTCSC} achieves a lower RMSE than other competitors, especially the univariate cleaning methods, which follows the intuition that we should consider the multivariate case as a whole;
\begin{revised}
4) Once we know that errors occur individually, \textsf{MTCSC-Uni} is a better choice;
%Statistical tests for different error patterns recommend to use the uni version of \textsf{MTCSC} in ``separate'' pattern and the multi version of \textsf{MTCSC} in ``together'' pattern. 
%In addition, we recommend using \textsf{MTCSC-C} in the general case, as it achieves a better trade-off between efficiency and effectiveness.
\end{revised}
% 5.3.3 with Figures 8-12
\begin{revised}
5) \textsf{MTCSC-C} can also perform an accurate cleaning of high-dimensional data in a reasonable time.
% 5.3.4 with Figure 13
\end{revised}
6) \textsf{MTCSC-C} can be effective with real GPS data with consecutive errors.
% 5.4.1 with Table \ref{table:gps}
7) We are surprised that \textsf{MTCSC} can achieve satisfactory repair results even when there is no correlation in terms of Euclidean distance between dimensions, once the speed in each dimension is scaled to a uniform magnitude (the smaller the difference, the better the result);
% 5.4.2 with Figures 9, 11, and Table 4
8) \textsf{MTCSC-A} accurately detects speed changes and thus achieves better repair results, relieving users of the difficult decision of setting an appropriate speed limit;
% 5.4.3 with Figure 14
9) Data cleaning promotes the accuracy of data science applications, and ours behave best.
% 5.5 with Figure \ref{exp:classification}
\begin{revised}
In short, we propose \textsf{MTCSC-C}, which uses only adjacent data points to identify the range of repair candidates, and \textsf{MTCSC-A} to dynamically determine the speed constraint to overcome the existing challenges 2 and 3 addressed in Section \ref{sect:solution}.
\end{revised}

On the other hand, our work still has some limitations.
1) As mentioned in \cite{DBLP:conf/sigmod/SongZWY15}, the arrival rate (the number of data points in a period) may vary and the data points may be delayed due to network reasons, but in this work we only assume that all points arrive in the correct order;
%2) The speed constraint is difficult to set on datasets with high dimensions due to the sparsity of data points;
2) Although we introduce \textsf{MTCSC-A} to enable adaptive speed constraint, choosing an appropriate window size is still an open problem;
3) The reason for our highlight in terms of correlation between data dimensions is not yet clear 
\begin{revised}
(challenge 1).
\end{revised}

%--------------------------------------------------
% !TEX root = ../speed.tex

\section{Related Work}
\label{sect:related}

\subsection{Traditional Cleaning}

\paragraph{Constraint-based Cleaning} 
Constraint plays an important role in time series repairing. 
Holistic cleaning \cite{DBLP:conf/icde/ChuIP13} is the initial method supporting speed constraints, but it is designed for relational data. 
\cite{DBLP:journals/pvldb/GolabKKSS09} focuses on the difference between consecutive data points under sequential dependency, but it lacks precise expression of speed constraints. 
SCREEN\cite{DBLP:conf/sigmod/SongZWY15} proposes an online cleaning method under speed constraints but can only handle univariate data.
SpeedAcc \cite{DBLP:journals/tods/SongGZWY21} further considers the acceleration constraints in univariate time series. 
HTD \cite{DBLP:journals/ahswn/ZhouYZSMY22} considers both the dimensional correlation and temporal correlation, but relies heavily on the difference between labeled truth and the observations.
RCSWS \cite{DBLP:journals/tist/FangWYX22} repairs the GPS data based on range constraints and sliding window statistics.
Our \textsf{MTCSC} also takes advantage of data distribution and is applicable in multivariate time series cleaning.

%\paragraph{Smoothing and Filter-based Cleaning} The moving average\cite{DBLP:books/daglib/0005327} is commonly used to smooth time series data and make predictions. The simple moving average (SMA) calculates the unweighted mean of the last k data points. It is used to forecast the next value in the time series. Alternatively, the weighted moving average (WMA) assigns different weights to data based on their positions in the sample window, such as using the inverse value of the time interval as the weight. The exponentially weighted moving average (EWMA)\cite{GARDNER2006637} assigns exponentially decreasing weights over time. The smoothing-based cleaning methods modify a large portion of the data, which can have some impact on the distribution of the data itself. The Median Filter\cite{DBLP:books/lib/Tukey77} replaces a point's value with the median of its neighbors, eliminating isolated noise. The Savitzky-Golay Filter\cite{doi:10.1021/ac60214a047} uses NDVI time series' average trend for parameter determination, employing polynomials in a sliding window to preserve signal shape and reduce noise interference for improved smoothness. Filter-based methods often have high computational complexity and longer processing times.
\paragraph{Smoothing-based Cleaning} 
Moving average \cite{DBLP:books/daglib/0005327} is widely used to smooth time series data and make predictions. 
The simple moving average (SMA) calculates the unweighted mean of the last k data points and forecasts the next value.
Alternatively, the weighted moving average (WMA) assigns different weights to data based on their positions. 
The exponentially weighted moving average (EWMA) \cite{GARDNER2006637} assigns exponentially decreasing weights over time. 
These techniques always modify a large portion of the data, which may change the data distributions in original data.
%The Median Filter\cite{DBLP:books/lib/Tukey77} replaces a point's value with the median of its neighbors, eliminating isolated noise. The Savitzky-Golay Filter\cite{doi:10.1021/ac60214a047} uses NDVI time series' average trend for parameter determination, employing polynomials in a sliding window to preserve signal shape and reduce noise interference for improved smoothness. Filter-based methods often have high computational complexity and longer processing times.

%\paragraph{Statistical-based Cleaning} In the field of time series data cleaning, statistical-based methods are a focal point of research. LsGreedy\cite{DBLP:conf/sigmod/ZhangSW16} establishes a probability distribution model for velocity changes between adjacent data points by utilizing statistics, which solves the problem that the velocity constraint cannot detect small amounts degree of wrong defects. Hill et al. \cite{DBLP:journals/envsoft/HillM10}proposed an incremental clustering method where historical data is initially clustered, and the average values of the clustering results are then used as the predicted values for data cleaning. STPM\cite{DBLP:conf/icde/ZhengMC19} learns detailed data patterns from historical data and applies them to clean current data. Statistical-based cleaning may result in over-cleaning, misidentifying genuine anomalies as noise or outliers.

\paragraph{Statistical-based Cleaning} 
Statistical-based methods are also employed in time series cleaning. 
LsGreedy \cite{DBLP:conf/sigmod/ZhangSW16} establishes a probability distribution model of speed changes between adjacent points, which overcomes the problem that speed constraint cannot detect small errors. 
\cite{DBLP:journals/envsoft/HillM10} presents an incremental clustering method where historical data is initially clustered, and the average values of the clusters are then used as the repaired result.
STPM \cite{DBLP:conf/icde/ZhengMC19} learns detailed data patterns from historical data and applies them to clean current data. 
%Statistical-based cleaning may result in over-cleaning, misidentifying general anomalies as noise or outliers.

\subsection{Machine Learning-based Cleaning}
%\paragraph{Machine Learning-based Cleaning} HoloClean \cite{DBLP:journals/pvldb/RekatsinasCIR17} is a weakly supervised machine learning system that builds a probabilistic model, and uses the model to clean time series data. \gyf{HoloClean is not well-suited for handling discrete data and is only applicable to relational databases}. TranAD \cite{DBLP:journals/pvldb/TuliCJ22} is a prediction-based multivariate time series anomaly detection model that combines transformer-based encoder-decoder networks with adversarial training. FCVAE \cite{DBLP:journals/corr/abs-2402-02820} is a reconstruction-based univariate time series anomaly detection model that integrates global and local frequency features into CVAE's condition and incorporates a target attention mechanism. \gyf{Both of these deep learning algorithms do not perform well in terms of data prediction and reconstruction}.
HoloClean \cite{DBLP:journals/pvldb/RekatsinasCIR17} is a weakly supervised learning system that builds a probabilistic model for cleaning.
In order to cope with continuous data, we modify the quantization part and constraint parser part to promote the effectiveness on handling time series data with speed constraint. 
TranAD \cite{DBLP:journals/pvldb/TuliCJ22} is a prediction-based anomaly detection model that combines transformer-based encoder-decoder networks with adversarial training. 
CAE-M\cite{DBLP:journals/corr/abs-2107-12626} is an auto-encoder based anomaly detection model that combines convolution with a memory network that includes Autoregressive model and Bidirectional LSTM with Attention.
%\gyf{Both of these deep learning algorithms do not perform well in terms of data prediction and reconstruction}.

%--------------------------------------------------
\section{Conclusion}
\label{sect:conclusion}

In this paper, we study the data cleaning problem in multivariate time series.
Existing methods based on the minimum change principle manage to minimize the repair distance, but lead to border repair.
On the other hand, current constraint-based methods all focus on univariate time series.
Even if the speed constraint is satisfied in each dimension, the constraint is still violated in multiple dimensions.
To solve the above problems, we propose \textsf{MTCSC}, which is based on the minimum fix principle and applies the speed constraint in all dimensions together.

We first formalize the cleaning problem and propose \textsf{MTCSC-G} to obtain the global repair.
Then, we propose \textsf{MTCSC-L} to support online cleaning.
To balance the efficiency and effectiveness, we develop \textsf{MTCSC-C} by considering both data distribution and speed constraint.
In addition, \textsf{MTCSC-A} is introduced to dynamically capture the speed constraint in real-world scenarios.
Experiments on various datasets with different error rates, data sizes and error patterns show that our proposals have good repair performance, robustness and scalability.
The reason why the speed constraint can be effective on datasets with weak/no correlations is interesting and remains an open problem.

%%
%% The acknowledgments section is defined using the "acks" environment
%% (and NOT an unnumbered section). This ensures the proper
%% identification of the section in the article metadata, and the
%% consistent spelling of the heading.
%\begin{acks}
%To Robert, for the bagels and explaining CMYK and color spaces.
%\end{acks}

%%
%% The next two lines define the bibliography style to be used, and
%% the bibliography file.
\bibliographystyle{ACM-Reference-Format}
\bibliography{speed}

\newpage

%--------------------------------------------------
%\input{tex/revision.tex}

\end{document}